\newtheorem{thm}{Theorem}
\newtheorem{corol}{Corollary}
\tikzset{external/only named=true}
\newcommand{\dupD}{\delta^\uparrow}
\newcommand{\ddoD}{\delta^\downarrow}
\newcommand{\dsu}{\delta_S^\uparrow}
\newcommand{\dsd}{\delta_S^\downarrow}
\newcommand{\figPath}[1]{figures/#1}
\newcommand{\NOR}{\texttt{NOR}}
\newcommand{\NAND}{\texttt{NAND}}
\newcommand{\cg}{\texttt{C}}
\newcommand{\AOIgate}{\texttt{AOI}}
\newcommand{\dmin}{\delta_{\mathrm{min}}}
\newcommand{\vth}{V_{th}}
\newcommand{\vdd}{V_{DD}}
\newcommand{\gnd}{\textit{GND}}
\newcommand{\vout}{V_{out}}
\newcommand{\dd}{\mathrm{d}}
\newcommand{\invt}{\texttt{InvTool}}
\newcommand{\nmos}{\texttt{nMOS}}
\newcommand{\pmos}{\texttt{pMOS}}
\newcommand{\ohm}{(OHM)}
\newcommand{\on}{\mbox{\emph{on}}}
\newcommand{\off}{\mbox{\emph{off}}}
\begin{document}

\title{A Hybrid Delay Model for Interconnected Multi-Input Gates\thanks{
The research work of Arman Ferdowsi was funded by the Austrian Science Fund (FWF) project DMAC [10.55776/P32431]. Matthias Függer's research was supported by the ANR DREAMY (ANR-21-CE48-0003). Josef Salzmann was funded by the the Austrian Federal Ministry for Digital and Economic Affairs, the National Foundation for Research, Technology and Development and the Christian Doppler Research Association.}}

%
%
%

\author{
  \IEEEauthorblockN{
    Arman Ferdowsi\IEEEauthorrefmark{1}, 
    Matthias Függer\IEEEauthorrefmark{3},
        Josef Salzmann\IEEEauthorrefmark{2}, 
        Ulrich Schmid\IEEEauthorrefmark{1}
  }

  \IEEEauthorblockA{\IEEEauthorrefmark{1}TU Wien, Embedded Computing Systems Group, Vienna, Austria \\ 
    \{aferdowsi, s\}@ecs.tuwien.ac.at}
    
      \IEEEauthorblockA{\IEEEauthorrefmark{3}CNRS, LMF, ENS Paris-Saclay, Université Paris-Saclay, Paris, France \\ 
    mfuegger@lmf.cnrs.fr} 
    
  \IEEEauthorblockA{\IEEEauthorrefmark{2}TU Wien, CD-Laboratory for Single Defect Spectroscopy at the Institute for Microelectronics, Vienna, Austria \\ 
    salzmann@iue.tuwien.ac.at}

}



\maketitle

\begin{abstract}

Dynamic digital timing analysis is a less accurate but fast alternative to highly accurate but slow analog simulations of digital circuits. It relies on gate delay models, which allow the determination of input-to-output delays of a gate on a per-transition basis. Accurate delay models not only consider the effect of preceding output transitions here but also delay variations induced by multi-input switching (MIS) effects in the case of multi-input gates.
Starting out from a first-order hybrid delay model for CMOS 
two-input \NOR\ gates, we develop a hybrid delay model for Muller \cg\ gates 
and show how to augment these models and their analytic delay formulas by a first-order 
interconnect. Moreover, we conduct a systematic evaluation of the resulting modeling 
accuracy: Using SPICE simulations, we quantify the MIS effects on the gate delays under various wire lengths, load capacitances, and input strengths for two different CMOS technologies, comparing these results to the predictions of appropriately parameterized versions of our new gate delay models. Overall, our experimental results reveal that they capture all MIS effects with a surprisingly good accuracy despite being first-order only.
\end{abstract}

\begin{IEEEkeywords}
dynamic timing analysis, gate delay models, interconnected multi-input gates, thresholded hybrid systems.
\end{IEEEkeywords}

\newcommand{\NEW}[1]{{\color{blue}#1}}

\section{Introduction}
\label{sec:intro}

\emph{Digital} timing analysis techniques are indispensable in modern circuit design, 
as they enable the validation of large designs: 
Thanks to the elaborate \emph{static timing analysis} (STA) techniques available
for digital timing analysis nowadays, which employ elaborate
models like CCSM~\cite{Syn:CCSM} and 
ECSM~\cite{Cad:ECSM} that facilitate an accurate corner-case analysis of the gate delays along the critical paths of a digital circuit,
worst-case as well as best-case delays can be determined
accurately and very fast.

The major shortcoming of classic STA is its inability to explicitly take into account
PVT variations and dynamic effects like slew-rate variations, crosstalk, and \emph{multi-input switching} (MIS) 
effects \cite{CS96:DAC,CGB01:DAC}. More precisely, corner-case analysis techniques can incorporate 
such effects only by
considering the overall worst-case resp.\ best-case scenario, which usually generates (way) too
conservative delay estimates. Whereas approaches like propagation windows \cite{SRC15:TODAES} have been
proposed as a means for reducing the resulting excessive margins, they cannot
be eliminated completely.
\emph{Statistical} static timing analysis (SSTA) techniques 
\cite{BCSS08:TCAD,FP09:VLSI} have hence been developed to mitigate this problem.
SSTA is based on some statistical models of the variabilities and computes statistical delay estimates based on those;
existing approaches \cite{ADB04:DAC,SZ05:ICCAD,YLW05,FTO08:DAC,TZBM11,SRPW20:DAC} differ 
in the particular statistical models used. 
Obviously, this works very well for effects like process variations,
where an accurate statistical model can be provided, but not for effects
that depend on the actual trace of the signals generated by a circuit.

Indeed, neither classic STA nor SSTA can take into account the actual
trace history for a given signal transition. The delay of a gate
stimulated by this transition may depend strongly on this history, however: Even
for a single-input-single-output gate like an inverter, an input
transition causing the output to change shortly after the previous
output change will have a shorter delay than an output change that happens
only after the previous one has saturated (this variability has been called
\emph{drafting effect} in \cite{CharlieEffect}). For multi-input gates,
gate delays may also vary when transitions at \emph{different} inputs
happen in close proximity (such MIS effects are also known as
\emph{Charlie effects}, named after Charles Molnar, who identified
their causes in the 70th of the last century).

Consequently, in the case of a timing violation reported by STA
in some critical part of a circuit, which may in fact be a false negative
originating in excessive margins, verifying the correct operation requires a detailed analysis
of the actual signal traces generated by the circuit.
This need can be nicely exemplified by means of the token-passing ring described
and analyzed 
by Winstanley et al. in \cite{CharlieEffect}, for example: This asynchronous
circuit implements a ring oscillator made up of stages consisting of a
2-input Muller \cg\ gate, the inputs of which are connected to the previous resp.\ next
stage. The authors uncovered that the ring exhibits two 
modes of operation, namely, burst behavior versus evenly spaced output transitions, 
which can even alternate over time.
The actual mode depends on some subtle interplay between the drafting
effect and the Charlie effect of the \cg\ gates in the circuit. Hence,
in order to analyze the behavior of the ring, the timing relation of
\emph{individual} transitions need to be traced throughout the
circuit.

Facilitating this is the purpose of \emph{dynamic timing analysis} techniques,
which are supported by virtually all modern circuit design tools nowadays.
Analog simulations, e.g., using SPICE \cite{NP73:spice}, are the golden
standard here. Unfortunately, however, analog simulation times are prohibitively expensive even for moderately large circuits and short signal traces, as the
dimension of the system of differential equations that need to be solved
numerically increases with the number of transistors. By constrast,
\emph{digital} dynamic timing analysis techniques rest on \emph{delay models} that provide gate delay
estimations on a per-transition basis. Suitable gate delay models allow fast correctness validation and accurate performance and power estimation \cite{najm1994survey} of a large circuit even at early stages of the development.

Compared to (S)STA, the-state-of-the-art in \emph{accurate} digital dynamic timing
  analysis is much less developed, however. Industrial tools like ModelSim
  offer only simple
pure and inertial delay models \cite{Ung71} here: CCSM and ECSM models
are typically used here for computing (constant) gate delays, which are
subsequently employed 
for \emph{every} transition occurring in the timing simulation. Since this approach
inherently exhibits no (pure delays) resp.\ almost no (inertial delays)
history dependency, it can neither model the drafting or MIS effects.

The present paper\footnote{A preliminary version of this paper has appeared
at DSD'23 (where it was nominated for the best paper award). This conference version was
based on the results of \cite{ferdowsi2023accurate}, however, which are piecewise approximations
of the gate delay. The interconnect extension described in the present paper is based
on the explicit delay formulas developed in \cite{ferdowsi2024faithful}. An extended version of the present paper can be found in \cite{ferdowsi2024hybrid}.} 
contributes to advancing the state-of-the-art in digital
dynamic timing analysis by providing fast delay models that
accurately capture all
MIS effects. More specifically, we augment the thresholded hybrid model 
proposed for ``naked''
2-input \NOR\ gates in \cite{ferdowsi2024faithful} to also
incorporate the interconnect to the successor gates, and develop an analogous
model for Muller \cg\ gates as well. The result of our efforts are simple and
efficiently parametrizable analytic delay formulas, which allow to compute
the delay $\delta(\Delta)$ for a given transition with \emph{input separation
time} $\Delta$ (which is the time difference to the closest transition of
the other input) by means of a single function evaluation.

In order to evaluate how accurately our models capture MIS effects for
gates with different interconnects, driving strengths, output loads, and
implementation technologies, we parametrize our model to match 3 specific
delay values ($\delta(0)$, $\delta(\infty)$ and $\delta(-\infty)$) of 
the given gate in its specific environment, and then compare
the model's delay predictions $\delta(\Delta)$ to SPICE simulation data, for
arbitrary values of $\Delta$. Overall, our experimental results reveal
that our model captures all MIS effects in any setting with a surprisingly 
good accuracy. This is quite surprising, given that they are based on 
first-order thresholded hybrid models (see \cref{sec:DDTA}) only, which
in turn are instrumental for developing analytic delay formulas that facilitate
fast digital timing analysis. 
We need to stress already here, however, that our models
are neither suited nor intended for being used in the context of STA frameworks; indeed,
due to their simplicity, our models are not competitive in terms of accuracy to the 
models surveyed in \cref{sec:MIS}.

\medskip

\noindent
\emph{Detailed contributions:}

\begin{enumerate}
\item[(1)] We augment the model from \cite{ferdowsi2024faithful} by an RC-type
interconnect and determine analytic expressions
for the trajectories and the resulting delays. The choice for a simple RC-type
model, as opposed to higher-order models, is motivated by the goal of an analytically solvable and simple-to-compute first-order model.

\item[(2)] We extend the parametrization procedure from \cite{ferdowsi2024faithful}
  to also determine the additional interconnect-related model parameters. Our extension
hence preserves the important advantage of analytic delay formulas, which is a strikingly
simple gate characterization procedure: For characterizing a 2-input \NOR\ gate, we only
need to plug-in three characteristic MIS delay values ($\delta(0)$, $\delta(\infty)$ and 
$\delta(-\infty)$ of the to-be-characterized gate into some parametrization functions, which
compute all the model parameters needed for matching those delays. Note that we even managed to incorporate
and additional interconnect pure delay $\dmin$ into our model.
  
\item[(3)] Based on our findings for the \NOR\ gate in (1) and (2), we develop a thresholded 
hybrid model for the Muller \cg\ gate and its parametrization procedure. However, due to space limitations, we had to defer the detailed exposition to a an extended version \cite{ferdowsi2024hybrid} of the present paper.

\item[(4)] We conduct a series of simulations to determine the accuracy of our augmented models
  for interconnected \NOR\ gates. In our evaluation,
we consider two different CMOS technologies (\SI{15}{\nm} and
\SI{65}{\nm}), and vary input driving strength (i.e., slew rate), wire length, 
load capacitance, wire resistance, and
wire capacitance. Using SPICE simulations, we determine the actual
gate delays for different values of $\Delta$, and compare those
to the predictions $\delta(\Delta)$ of appropriately parametrized versions
of our gate delay formulas. A preliminary implementation of our 
models in the discrete event simulation-based Involution Tool \cite{OMFS20:INTEGRATION} 
is used to demonstrate a speedup of several orders of magnitude compared to SPICE.
\end{enumerate}

\textbf{Paper organization:}
In \cref{sec:relwork}, we provide an account of related work.
In \cref{sec:background}, we summarize the cornerstones
of the thresholded hybrid model for \NOR\ gates presented in \cite{ferdowsi2024faithful}. In \cref{Sec:IntModel} resp.\ \cref{Sec:IntModelC}, we present our interconnect-augmented model, its parametrization, and its experimental accuracy evaluation for \NOR\ resp.\ \cg\ gates. Some concluding remarks and directions of future work in \cref{Sec:con} round-off our paper.

\section{Related Work}
\label{sec:relwork}

In this section, we briefly report on related work on delay models 
that capture MIS effects (\cref{sec:MIS}) and on accurate digital 
dynamic timing analysis (\cref{sec:DDTA}).

\subsection{Multi-input switching (MIS) effects}
\label{sec:MIS}

Consider the CMOS implementation of a \NOR\ gate shown in \cref{fig:nor_CMOS}, which consists of two serial \pmos\ ($T_1$ and $T_2$) for charging the load capacitance $C$ (producing a rising output transition), and two parallel \nmos\ transistors ($T_3$ and $T_4$) for discharging it (producing a falling one). 
When an input experiences a rising transition, the corresponding \nmos\ transistor closes while the corresponding \pmos\ transistor opens, so $C$ will be discharged.
If both inputs $A$ and $B$ experience a rising transition at the same time, 
$C$ is discharged twice as fast. Since the gate delay depends on the 
discharging speed, it follows that the falling output delay 
$\dsd(\Delta)$ increases (by almost 30\% in the example shown in \cref{corFig3}) when the \emph{input separation time} 
$\Delta=t_B-t_A$ increases from 0 to $\infty$ or decreases from 0 to $-\infty$.
Obviously, this MIS effect gets more pronounced for \NOR\ gates with a
  larger fan-in, as the number of parallel $\nmos$ transistors increases.

For falling input transitions, the behavior of the \NOR\ gate is quite different: \cref{corFig5} shows that the MIS effects lead to a moderate decrease of the rising output delay $\dsu(\Delta)$ when $|\Delta|$ goes from 0 to $\infty$, which is primarily caused by capacitive coupling and the parasitic capacitances at the nodes between the \pmos\ transistors. Note that the resulting MIS effect is much
less pronounced than the one for the falling output delay.

\begin{figure}[t!]
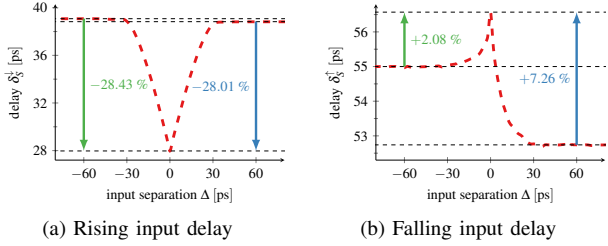

  \centering
  \subfloat[Rising input delay]{
    \includegraphics[width=0.43\linewidth]{\figPath{nor2_out_down_charlie_15nm_colored.pdf}}%
    \label{corFig3}}
  \hfil
  \subfloat[Falling input delay]{
    \includegraphics[width=0.43\linewidth]{\figPath{nor2_out_up_charlie_15nm_colored.pdf}}%
    \label{corFig5}}
  \caption{MIS effects in the measured delay of a $15$nm technology
    CMOS \NOR\ gate. $\Delta=t_B-t_A$ is the input separation time between
  effective signal transitions at the inputs $A$ and $B$.}\label{fig:Charlie15nmSim}
\end{figure}

\medskip

Quite a number of different models capable of capturing MIS effects have
been proposed in the literature, most of them in the context of STA
approaches. In~\cite{CS96:DAC}, Chandramouli and Sakallah resorted to
macromodels, i.e., blackbox functions involving delay-relevant input
parameters like load capacitance as well as the input transition time(s),
which compute the gate delay. They also show how to compose 2-input macromodels
to get $n$-input macromodels. Their approach achieves a delay and slew rate
prediction accuracy in the $1 \dots 10$\% range.

In \cite{CGB01:DAC}, the authors develop empirical delay formulas that
cover MIS effects for 2-input gates, and use curve fitting (based on detailed
simulation data, for every gate) for determining the appropriate parameters. They also show how to incorporate their model into STA, resorting to incremental timing refinement (ITR) to reduce the margins. 
A similar approach has been advocated in~\cite{SKJPC09:ISOCC}, where
a quadratical polynomial is used for the delay formula. In \cite{SRC15:TODAES},
Subramaniam, Roveda, and Cao used a piecewise linear delay function for
this purpose.
For model parametrization of a 2-input gate, only three representative
delay values (minimum, maximum and some intermediate value) are needed,
which makes gate characterization orders of magnitude faster than in
other approaches.
The authors use propagation windows to reduce the margins resulting
from using their modeling approach in STA.

A machine-learning-based MIS modeling approach has been proposed
in~\cite{RS21:TCAD}. Whereas it is distinguished by its very
good accuracy, which is in the few \%-range, its downside is the
inevitable per-gate training requirement.

\medskip

All the MIS models surveyed above target the delay functions 
themselves. A very different type of delay model is obtained
by developing a simplified model of a gate and determining the 
resulting delay function. In the approach proposed in \cite{AKMK06:DAC},
Amin et~al.\ used an analog model that consists of suitably
defined non-linear resistors and capacitors at each pin of 
a gate. Circuits are composed by composing the appropriate
models of the gates along the circuit's paths. Accuracy validation 
experiments revealed an accuracy in the $1 \dots 10$\% range.

A similar type of MIS delay models has been developed in the context
of digital dynamic timing analysis, which will be surveyed in the
following subsection. In sharp contrast to the model of 
\cite{AKMK06:DAC}, which deals with analog signals, the 
models described there belong to the class of thresholded 
hybrid models and hence process and generate digital signals.

\subsection{Digital dynamic timing analysis}
\label{sec:DDTA}

Digital delay models suitable for \emph{accurate} dynamic timing analysis must at least be single-history \cite{FNS16:ToC}: In order to model the drafting effect mentioned in \cref{sec:intro}, a gate's input-to-output delay $\delta(T)$ must be dependent on the previous-output-to-input delay $T$. Non-trivial examples of single-history models are the \emph{delay degradation model} (DDM) \cite{BJV06}, which uses an exponential function for $\delta(T)$, and the \emph{involution delay model} (IDM) \cite{FNNS19:TCAD}, the delay function of which is a negative involution $-\delta(-\delta(T))=T$.

\begin{figure}[t]
  \centerline{
    \includegraphics[width=0.75\linewidth]{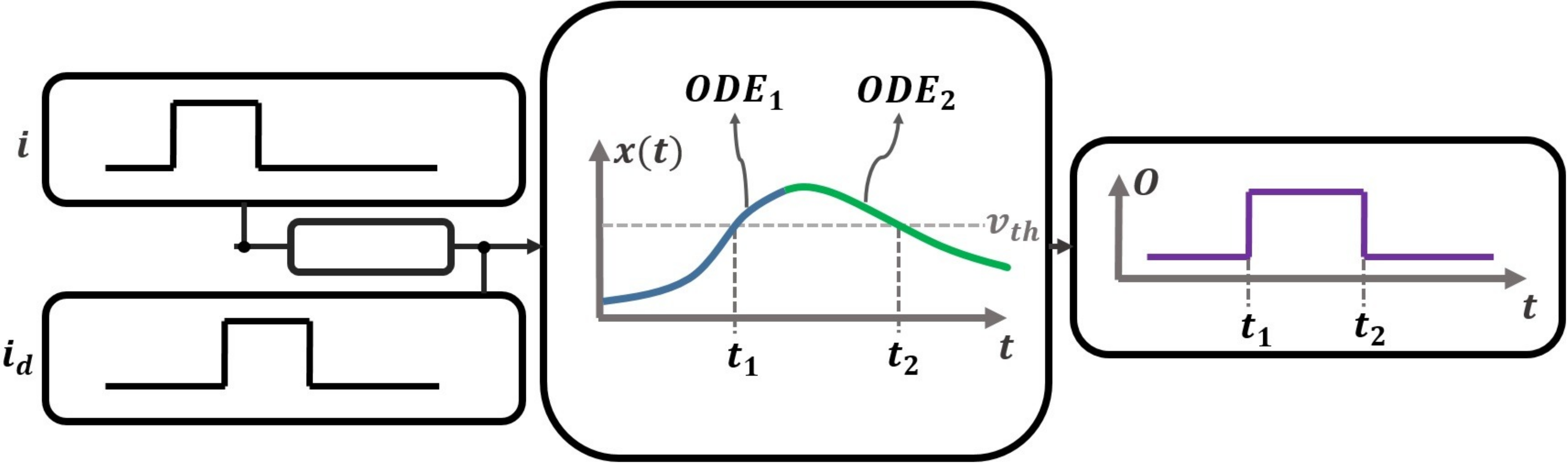}
  }
  \caption{Illustration of the thresholded hybrid system of an IDM channel, with a single input $i$ and output $o$.
  It comprises an (optional) pure delay shifter, producing $i_d$, and two ODEs governing some state signal $x(t)$
  that is digitized by a threshold voltage comparator to produce $o$. The active ODE is selected by the current 
  state of $i_d$, with mode switches that guarantee continuity of $x(t)$.}
  \label{fig:switched}
\end{figure}

Among all currently known delay models, the IDM is the only one that faithfully models glitch propagation in the canonical short-pulse filtration problem \cite{FNS16:ToC}. Another particularly compelling feature of the IDM is its
simplicity, which originates in the fact that it can be viewed as a simple 2-state thresholded first-order hybrid model \cite{FFNS23:HSCC}. As illustrated in \cref{fig:switched}, in such a system, the state of the digital input is used to select one among two \emph{ordinary differential equations} (ODEs) that govern some internal analog signal, the 
digitized version of which constitutes the digital output. This simplicity allows one to derive explicit analytic formulas for the IDM channel delays $\delta(T)$, which are instrumental for very fast digital timing simulation.

The IDM also comes with a publicly available discrete-event simulation framework,
the \emph{Involution Tool} \cite{OMFS20:INTEGRATION}, which 
also allows the evaluation of the accuracy of IDM delay predictions against SPICE-generated data and other delay 
models. Both measurements \cite{NFNS15:GLSVLSI} and simulations \cite{OMFS20:INTEGRATION} using 
the Involution Tool showed a very good accuracy for inverter chains and clock trees.
Since the simulation environment of the Involution Tool does not use
numerical integration like SPICE, but rather just invokes
the computation of the delay (a single function evaluation) once
per transition, it is orders of magnitude faster than analog
simulations. For example, in the case of the clock tree, a 
speedup of a factor of 250 has been obtained relative to SPICE
in terms of simulation times in \cite{OMFS20:INTEGRATION}.

For circuits also involving multi-input gates, however, the delay prediction
accuracy of the IDM degrades considerably \cite{OMFS20:INTEGRATION}. This
is not surprising since the single-input single-output IDM delay channels
are obviously incapable of varying the gate delay based on the input separation
time $\Delta$ between signal transitions at \emph{different} inputs.
Gate delay models that explicitly cover MIS effects 
are inevitable for mitigating this problem.

In \cite{FMOS22:DATE, ferdowsi2023accurate, ferdowsi2024faithful}, Ferdowsi et~al.\ 
developed thresholded hybrid models for 2-input CMOS \NOR\ and \NAND\ gates. These models 
are all based on replacing transistors by (possibly time-varying) switched resistors. All MIS
effects are covered by the model proposed in \cite{ferdowsi2023accurate, ferdowsi2024faithful},
which is a 4-state hybrid first-order model based on the Shichman-Hodges transistor model~\cite{ShichmanHodges}. 
Whereas the ODEs governing its 4 modes are all first-order, they have time-varying coefficients and are 
hence not trivial to solve analytically. The delay formulas derived in \cite{ferdowsi2023accurate} were 
hence based on piecewise approximations and thus quite complex. In \cite{ferdowsi2024faithful}, however, 
explicit solutions for the ODEs were determined and used for deriving simple and accurate analytic 
delay formulas. Thanks to a simple and fast parametrization procedure, only three delay
values ($\delta(0)$, $\delta(\infty)$ and $\delta(-\infty)$ are needed for computing the model
parameters for a given gate. Using a comparison to SPICE-generated traces, the authors showed
that appropriately parametrized versions of their model predict the actual delay of \NOR\ gates 
implemented in different CMOS technologies accurately and very fast, for any value of the input
separation time $\Delta$.

A serious limitation of the above delay models is that they only consider 
gates in isolation, however, i.e., without any interconnect. Unfortunately,
wires do have a substantial effect on circuit delays in practice:
they have non-negligible parasitic capacitances, resistances, and inductances,
which are spatially distributed and hence change with the wire length. 
The present paper will introduce an interconnect-extended version of the 
model of \cite{ferdowsi2024faithful}, presented in \cref{sec:background},
and evaluate its accuracy.

Compared to the existing MIS models surveyed above, our interconnect-augmented
model will be
faithful w.r.t.\ the SPF problem, cp.~\cite{FNS16:ToC}, as simple,
fast and efficient w.r.t.\ gate characterization as \cite{SRC15:TODAES},
will provide analytical and efficiently computable delay formulas that facilitate
very fast dynamic timing analysis, and last but not least will offer a good
delay prediction accuracy.

\section{Starting Point: A Thresholded Hybrid Model for \NOR\ Gates}
\label{sec:background}

In this section, we briefly revisit the cornerstones of the advanced thresholded hybrid model for \NOR\ (as well as \NAND) gates
introduced in \cite{ferdowsi2023accurate,ferdowsi2024faithful}. In \cref{Sec:IntModel}, we will extend this model
to also incorporate the interconnect between gates.

\begin{figure}[h]
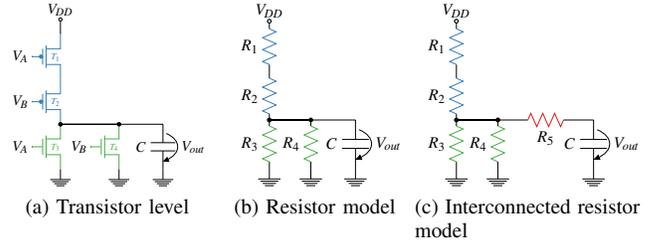

  \centering
  \subfloat[Transistor level]{
\includegraphics[height=0.28\linewidth]{\figPath{nor_RC_colored.pdf}}  
    \label{fig:nor_CMOS}}
  \hfil
  \subfloat[Resistor model]{
 \includegraphics[height=0.28\linewidth]{\figPath{nor_R_colored.pdf}}%
    \label{FigureNOR-GATE}}
  \hfil
  \subfloat[Interconnected resistor model]{
 \includegraphics[height=0.28\linewidth]{\figPath{nor_R_interconnect2.pdf}}%
    \label{FigureNOR-GATE_Int}}    
  \caption{Transistor schematic and the resistor model of a CMOS \NOR\ gate along with its augmented RC interconnect component.}
\end{figure}

Rather than representing all transistors by zero-time switches as in \cite{FMOS22:DATE}, the model of a 2-input CMOS \NOR\ gate proposed in \cite{ferdowsi2023accurate,ferdowsi2024faithful} replaces (some) transistors in \cref{fig:nor_CMOS} by time-varying resistors, as shown in \cref{FigureNOR-GATE}. These resistors, denoted as $R_i(t)$ for $i \in \{1,\ldots,4 \}$, vary between a predetermined on-resistance $R_i$ and the off-resistance $\infty$.
The governing principle for $R_i(t)$, which will be based on the Shichman-Hodges transistor model~\cite{ShichmanHodges}, is contingent upon the state of the corresponding input signal at the gate of the corresponding transistor.

This results in a hybrid model comprising four distinct modes, according to the four possible input states $(A,B)\in \{(0,0), (0,1), (1,0), (1,1) \}$. \cref{tab:T1} shows all possible input state transitions and the corresponding
resistor time evolution mode switches. Double arrows in the mode switch names
indicate MIS-relevant modes, whereas $+$ and $-$ indicate whether input $A$ switched
before $B$ or the other way round.  For instance, assume the system is in state
$(0,0)$ initially, i.e., that both $A$ and $B$ were set to 0 at time
$t_A= t_B= -\infty$. This causes $R_1$ and $R_2$ to be in the \emph{on-mode},
whereas $R_3$ and $R_4$ are in the \emph{off-mode}. If $A$ is switched to 1 at time
$t_A=0$, $R_1$ resp.\ $R_3$ is switched to the
off-mode resp.\ on-mode at time
$t^{\off}_1 = t^{\on}_3 = t_A = 0$. The corresponding mode switch is
$T_{-}^{\uparrow}$ and reaches state $(1,0)$. If $B$ is also
switched to 1 at some time $t_B=\Delta>0$, $R_2$ resp.\ $R_4$ is switched to the off-mode resp.\
on-mode at time $t^{\off}_2= t^{\on}_4 = t_B=\Delta$. The corresponding
mode switch is $T_{+}^{\uparrow\uparrow}$ and reaches state $(1,1)$. 

\begin{table}[t]
\centering
\caption{\small State transitions and modes. $\uparrow$ and $\uparrow \uparrow$ (resp.\ $\downarrow$ and $\downarrow \downarrow$) represent the first and the second rising (resp.\ falling) input transitions. $+$ and $-$ specify the sign of the switching time difference $\Delta=t_B-t_A$.}
\scalebox{0.65}
{
\begin{tabular}{lllllllllll}
\hline
Mode                            &  & Transition                &  & $t_A$       & $t_B$       &  & $R_1$                & $R_2$                & $R_3$                & $R_4$                \\ \cline{1-1} \cline{3-3} \cline{5-6} \cline{8-11} 
$T^{\uparrow}_{-}$              &  & $(0,0) \rightarrow (1,0)$ &  & $0$         & $-\infty$   &  & $on \rightarrow off$ & $on$                 & $off \rightarrow on$ & $off$                \\
$T^{\uparrow \uparrow}_{+}$     &  & $(1,0) \rightarrow (1,1)$ &  & $-|\Delta|$ & $0$         &  & $off$                & $on \rightarrow off$ & $on$                 & $off \rightarrow on$ \\
$T^{\uparrow}_{+}$              &  & $(0,0) \rightarrow (0,1)$ &  & $-\infty$   & $0$         &  & $on$                 & $on \rightarrow off$ & $off$                & $off \rightarrow on$ \\
$T^{\uparrow \uparrow}_{-}$     &  & $(0,1) \rightarrow (1,1)$ &  & $0$         & $-|\Delta|$ &  & $on \rightarrow off$ & $off$                & $off \rightarrow on$ & $on$                 \\
$T^{\downarrow}_{-}$            &  & $(1,1) \rightarrow (0,1)$ &  & $0$         & $-\infty$   &  & $off \rightarrow on$ & $off$                & $on \rightarrow off$ & $on$                 \\
$T^{\downarrow \downarrow}_{+}$ &  & $(0,1) \rightarrow (0,0)$ &  & $-|\Delta|$ & $0$         &  & $on$                 & $off \rightarrow on$ & $off$                & $on \rightarrow off$ \\
$T^{\downarrow}_{+}$            &  & $(1,1) \rightarrow (1,0)$ &  & $-\infty$   & $0$         &  & $off$                & $off \rightarrow on$ & $on$                 & $on \rightarrow off$ \\
$T^{\downarrow \downarrow}_{-}$ &  & $(1,0) \rightarrow (0,0)$ &  & $0$         & $-|\Delta|$ &  & $off \rightarrow on$ & $on$                 & $on \rightarrow off$ & $off$                \\ \hline
\end{tabular}}
\label{tab:T1}
\end{table}

A crucial part of the model is the choice of the governing principle dictating the temporal variation of $R_i(t)$ during the on- and off-mode: It should reasonably model the actual behavior of a transitor while facilitating the analytical solvability of the ensuing ordinary differential equation (ODE) systems. In 
\cite{ferdowsi2023accurate}, the continuous resistance model defined by
\begin{align}
R_j^{\on}(t) &= \frac{\alpha_j}{t-t^{\on}}+R_j; \ t \geq t^{\on}, \label{on_mode}\\
R_j^{\off}(t) &= \beta_j (t-t^{\off}) +R_j; \ t \geq t^{\off}, \label{off_mode}
\end{align}
for some constant slope parameters $\alpha_j$ [\si{\ohm\s}], $\beta_j$
[\si[per-mode=symbol]{\ohm\per\s}], and on-resistance $R_j$ [\si{\ohm}] is used;
$t^{\on}$ resp.\
$t^{\off}$ represent the time when the respective transistor is switched on
resp.\ off. These equations are based on the Shichman-Hodges transistor model~\cite{ShichmanHodges}, 
which assumes a quadratic correlation between the output current and the input voltage: \cref{on_mode}
and \cref{off_mode} follow from approximating the latter by 
$d \sqrt{t-t_0}$ in the operation range close to the threshold voltage $\vth$, 
with $d$ and $t_0$ denoting appropriate fitting parameters.
 
Interestingly, continuously varying resistors are only needed for switching on the \pmos\ transistors
in \cite{ferdowsi2023accurate}. In addition, rather than including $R_1(t)$ and $R_2(t)$ in the state of the
ODEs governing the appropriate modes, which would blow-up their dimensions, they are incorporated by means
of time-varying coefficients in simple first-order ODEs. All other transistor switchings, i.e., both the 
switching-off of the \pmos\ transistors
and any switching on or off of the \nmos\ transistors, happen instantaneously, as already employed in \cite{FMOS22:DATE}, which is accomplished by choosing the model parameters $\alpha_i=0$ in \cref{on_mode}for $i \in \{3,4 \}$,
and $\beta_k=\infty$ in \cref{off_mode} for $k \in \{1, \ldots ,4 \}$. 

In what follows, we will use
the notation $R_1 = R_{p_{A}}$, $R_2 = R_{p_{B}}$ with the abbreviation $2R = R_{p_{A}} + R_{p_{B}}$ for the two \pmos\ transistors $T_1$ and $T_2$ and $R_3 = R_{n_{A}}$, $R_4 = R_{n_{B}}$ for the two \nmos\ transistors $T_3$ and $T_4$.
Applying Kirchhoff's rules to \cref{FigureNOR-GATE} results
in the the non-autonomous, non-homogeneous ODE with non-constant coefficients

\begin{equation}
\label{Eq:ODE_base}
\frac{\dd V_{out}}{\dd t} =   -\frac{V_{out}}{C\,R_g(t)}+U(t),
\end{equation}
where
$\frac{1}{R_g(t)}=\frac{1}{R_1(t)+R_2(t)}+\frac{1}{R_3(t)}+\frac{1}{R_4(t)}$ and
$U(t)=\frac{V_{DD}}{C(R_1(t)+R_2(t))}$. It is well-known that the general solution of \eqref{Eq:ODE_base} is
\begin{equation}
\label{Eq1}
    V_{out}(t)= V_0\ e^{-G(t)} + \int_{0}^{t} U(s)\ e^{G(s)-G(t)}\dd s,
\end{equation}
where $V_0=V_{out}(0)$ denotes the initial condition and $G(t) = \int_{0}^{t}
(C\,R_g(s))^{-1} \dd s$.
As comprehensively described in \cite{ferdowsi2023accurate}, depending on each particular resistor's mode in each input state transition, different expressions for $R_g(t)$ and $U(t)$ are obtained. Denoting $I_1= \int_{0}^{t} \frac{\dd s}{R_1(s)+R_2(s)}$, $I_2= \int_{0}^{t}\frac{\dd s}{R_3(s)}$, and $I_3=\int_{0}^{t} \frac{\dd s}{R_4(s)}$, \cref{T:InerInt} summarizes those. The following \cref{thm:MISOuttraj} provides the resulting analytic formulas.

\begin{table}[h]
\centering
\caption{Integrals $I_1(t)$, $I_2(t)$, $I_3(t)$ and the function $U(t)$ for every possible mode switch; $\Delta=t_B-t_A$, and $2R=R_{p_A}+R_{p_B}$.}
\scalebox{0.58}
{
\begin{tabular}{llllll}
\hline
Mode                            &  & $I_1(t)= \int_{0}^{t} \frac{\dd s}{R_1(s)+R_2(s)}$                         & $I_2(t)= \int_{0}^{t}\frac{\dd s}{R_3(s)}$ & $I_3(t)=\int_{0}^{t} \frac{\dd s}{R_4(s)}$ & $U(t)= \frac{\vdd}{C(R_1(t)+R_2(t))}$                                                               \\ \cline{1-1} \cline{3-6} 
$T^{\uparrow}_{-}$              &  & $0$                                                                        & $\int_{0}^{t} (1/R_{n_A})\dd s$            & $0$                                        & $0$                                                                                                 \\
$T^{\uparrow \uparrow}_{+}$     &  & $0$                                                                        & $\int_{0}^{t} (1/R_{n_A})\dd s$            & $\int_{0}^{t} (1/R_{n_B}) \dd s$           & $0$                                                                                                 \\
$T^{\uparrow}_{+}$              &  & $0$                                                                        & $0$                                        & $\int_{0}^{t} (1/(R_{n_B}) \dd s$          & $0$                                                                                                 \\
$T^{\uparrow \uparrow}_{-}$     &  & $0$                                                                        & $\int_{0}^{t} (1/R_{n_A}) \dd s$           & $\int_{0}^{t} (1/R_{n_B}) \dd s$           & $0$                                                                                                 \\
$T^{\downarrow}_{-}$            &  & $0$                                                                        & $0$                                        & $\int_{0}^{t} (1/R_{n_B}) \dd s$           & $0$                                                                                                 \\
$T^{\downarrow \downarrow}_{+}$ &  & $\int_{0}^{t}(1/(\frac{\alpha_1}{s+\Delta}+\frac{\alpha_2}{s}+2R))\dd s$   & $0$                                        & $0$                                        & $\frac{\vdd t(t+ \Delta)}{C(2 R t^2 +(\alpha_1 + \alpha_2 + 2 \Delta R)t + \alpha_2 \Delta)}$       \\
$T^{\downarrow}_{+}$            &  & $0$                                                                        & $\int_{0}^{t} (1/(R_{n_A}) \dd s$          & $0$                                        & $0$                                                                                                 \\
$T^{\downarrow \downarrow}_{-}$ &  & $\int_{0}^{t}(1/(\frac{\alpha_1}{s}+\frac{\alpha_2}{s+|\Delta|}+2R))\dd s$ & $0$                                        & $0$                                        & $\frac{\vdd t(t+ |\Delta|)}{C(2 R t^2 +(\alpha_1 + \alpha_2 + 2 |\Delta| R)t + \alpha_1 |\Delta|)}$ \\ \hline
\end{tabular}}
\label{T:InerInt}
\end{table}

\begin{thm}[Output voltage trajectories for the \NOR\ gate {\cite[Theorems~6.2 and 6.3]{ferdowsi2024faithful}}] \label{thm:MISOuttraj}
For any $0 \leq |\Delta| \leq \infty$, the output voltage trajectory functions of our model for rising input transitions are given by
{\small
\begin{flalign}
V_{out}^{T^{\uparrow}_{-}}(t) &= V_{out}^{T^{\uparrow}_{-}}(0) e^{\frac{-t}{C R_{n_{A}}}},
\label{outsig1}\\
V_{out}^{T^{\uparrow}_{+}} (t) &= V_{out}^{T^{\uparrow}_{+}}(0) e^{\frac{-t}{C R_{n_{B}}}},
\label{outsig1neg}\\
V_{out}^{T^{\uparrow \uparrow}_{+}}(t) &=V_{out}^{T^{\uparrow}_{-}} (\Delta)  e^{- \bigl(\frac{1}{CR_{n_A}}+\frac{1}{CR_{n_B}}\bigr)t},
\label{outsig2}\\
V_{out}^{T^{\uparrow \uparrow}_{-}}(t) &=V_{out}^{T^{\uparrow}_{+}} (\Delta)  e^{- \bigl(\frac{1}{CR_{n_A}}+\frac{1}{CR_{n_B}}\bigr)t}.
\label{outsig2_neg}
\end{flalign}
}
The output voltage trajectory functions for falling input transitions are given by
{\small
\begin{flalign}
V_{out}^{T^{\downarrow}_{-}}(t) &= V_{out}^{T_{-}^{\downarrow}}(0) e^{\frac{-t}{CR_{n_B}}},
\label{eq:FirstFalltheorem}\\
V_{out}^{T^{\downarrow}_{+}}(t) &= V_{out}^{T_{+}^{\downarrow}}(0) e^{\frac{-t}{CR_{n_A}}},
\label{eq:FirstFalltheorem_plus}\\
V_{out}^{T^{\downarrow \downarrow}_{+}}(t)&= \vdd  + \bigl(V_{out}^{T^{\downarrow}_{-}}(\Delta) -\vdd   \bigr) \label{SoughtOutput} \\
 & \cdot\left[ e^{\frac{-t}{2RC}} \Bigl(1+\frac{2t}{d+\sqrt{\chi}}\Bigr)^{\frac{-A+a}{2RC}} \Bigl(1+\frac{2t}{d-\sqrt{\chi}}\Bigr)^{\frac{A}{2RC}} \right],\nonumber
\end{flalign}
}
where $a=\frac{\alpha_1+\alpha_2}{2R}$, $d=a+\Delta$,  $ \chi=d^2-4c'$, $c'=\frac{\alpha_2 \Delta}{2R}$, and $A=\frac{\alpha_2\Delta - aR(d- \sqrt{\chi})}{2R\sqrt{\chi}}$. The output voltage trajectory $V_{out}^{T^{\downarrow \downarrow}_{-}}(t)$ for
negative $\Delta$ is
obtained from \cref{SoughtOutput} by exchanging $\alpha_1$ and $\alpha_2$,
$V_{out}^{T^{\downarrow}_{-}}(\Delta)$ by $V_{out}^{T^{\downarrow}_{+}}(|\Delta|)$, and $\Delta$ by $|\Delta|$ in $d$, $\chi$ and $A$.
\end{thm}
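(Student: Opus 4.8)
\medskip
\noindent\emph{Proof plan.}
The plan is to specialize the variation-of-constants formula \eqref{Eq1} to each of the eight mode switches of \cref{tab:T1}, reading off $R_g$ and $U$ from \cref{T:InerInt}. Recall $G(t)=\int_0^t(CR_g(s))^{-1}\dd s$ and $1/R_g=1/(R_1+R_2)+1/R_3+1/R_4$, so $G$ is obtained by summing the relevant columns of \cref{T:InerInt} and dividing by $C$. I would first dispose of the seven ``easy'' modes and then treat $T^{\downarrow\downarrow}_{+}$ (with $T^{\downarrow\downarrow}_{-}$ following by symmetry) as the actual work.

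In the four single-transition modes $T^{\uparrow}_{-},T^{\uparrow}_{+},T^{\downarrow}_{-},T^{\downarrow}_{+}$ exactly one \pmos\ is switched off while its series partner is still off, hence $R_1(t)+R_2(t)=\infty$, i.e.\ $U\equiv 0$ and $I_1\equiv 0$; moreover exactly one \nmos\ is on with \emph{constant} resistance ($\alpha_3=\alpha_4=0$). Thus $G(t)=t/(CR_{n_A})$ or $t/(CR_{n_B})$ is linear and \eqref{Eq1} collapses to $V_{out}(t)=V_{out}(0)e^{-G(t)}$, which is \eqref{outsig1}, \eqref{outsig1neg}, \eqref{eq:FirstFalltheorem}, \eqref{eq:FirstFalltheorem_plus}. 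In the MIS rising modes $T^{\uparrow\uparrow}_{+},T^{\uparrow\uparrow}_{-}$ both \nmos\ are on with constant resistances while the \pmos\ path is still broken, so again $U\equiv 0$ and now $G(t)=(1/(CR_{n_A})+1/(CR_{n_B}))t$; the only thing to be careful about is the initial condition, which is the preceding single-transition trajectory sampled at the separation time, i.e.\ $V_{out}^{T^{\uparrow}_{-}}(\Delta)$ resp.\ $V_{out}^{T^{\uparrow}_{+}}(\Delta)$. This gives \eqref{outsig2} and \eqref{outsig2_neg}.

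For the main case $T^{\downarrow\downarrow}_{+}$, both \pmos\ are on, with $R_1(t)=\alpha_1/(t+\Delta)+R_{p_A}$ (switched on at $-\Delta$) and $R_2(t)=\alpha_2/t+R_{p_B}$, while both \nmos\ are off; hence $1/R_g=1/(R_1+R_2)$ and $G(t)=I_1(t)/C$ with the rational integrand of \cref{T:InerInt}. Clearing denominators turns the denominator into $2Rs^2+(\alpha_1+\alpha_2+2R\Delta)s+\alpha_2\Delta=2R(s+r_1)(s+r_2)$ with $r_{1,2}=(d\pm\sqrt{\chi})/2$, $r_1+r_2=d$, $r_1r_2=c'$, where $a,d,c',\chi$ are as in the statement. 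Polynomial division plus a partial-fraction decomposition of $s(s+\Delta)/\bigl(2R(s+r_1)(s+r_2)\bigr)$ integrates in closed form, and exponentiating yields
\[
e^{-G(t)}=e^{-t/(2RC)}\Bigl(1+\tfrac{2t}{d+\sqrt{\chi}}\Bigr)^{\frac{a-A}{2RC}}\Bigl(1+\tfrac{2t}{d-\sqrt{\chi}}\Bigr)^{\frac{A}{2RC}},
\]
where identifying the two logarithmic coefficients forces exactly $A=\bigl(\alpha_2\Delta-aR(d-\sqrt{\chi})\bigr)/(2R\sqrt{\chi})$ (using $\alpha_2\Delta=2Rc'$ and $d-\sqrt{\chi}=2r_2$). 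The observation that spares one the particular integral is that, since the \nmos\ branches are off, $G'(s)=1/(C(R_1(s)+R_2(s)))=U(s)/\vdd$, so $\int_0^t U(s)e^{G(s)-G(t)}\dd s=\vdd e^{-G(t)}\bigl(e^{G(t)}-e^{G(0)}\bigr)=\vdd(1-e^{-G(t)})$; with $V_0=V_{out}^{T^{\downarrow}_{-}}(\Delta)$ this gives $V_{out}(t)=\vdd+(V_0-\vdd)e^{-G(t)}$, which is \eqref{SoughtOutput}. The mode $T^{\downarrow\downarrow}_{-}$ is the mirror image ($R_2$ switched on first, at $-|\Delta|$), so exchanging $\alpha_1\leftrightarrow\alpha_2$ and $R_{p_A}\leftrightarrow R_{p_B}$, replacing the initial value by $V_{out}^{T^{\downarrow}_{+}}(|\Delta|)$ and $\Delta$ by $|\Delta|$ in $d,\chi,A$ reproduces the claim.

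I expect the only genuinely delicate step to be the last display: carrying out the partial-fraction bookkeeping and checking that, after substituting $r_{1,2}=(d\pm\sqrt{\chi})/2$ and $\alpha_2\Delta=2Rc'$, the coefficients of $\ln(1+2t/(d\pm\sqrt{\chi}))$ collapse to precisely $(a-A)/(2RC)$ and $A/(2RC)$. One must also neutralise the apparent singularities at $\Delta=0$ (there $d-\sqrt{\chi}=0$ but simultaneously $A=0$, so that factor is identically $1$), take the $\Delta\to\infty$ case by continuity, and — unless one restricts the parameter range — note that the formula persists for $\chi<0$ via the usual real recombination of the conjugate complex logarithms. Everything else is a direct substitution into \eqref{Eq1}.
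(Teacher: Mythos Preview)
Your approach is correct and is exactly the derivation the paper's setup invites: specialize the variation-of-constants formula \eqref{Eq1} to each mode using the data of \cref{T:InerInt}. Note, however, that the present paper does \emph{not} actually prove \cref{thm:MISOuttraj}; it is quoted as background from \cite[Theorems~6.2 and 6.3]{ferdowsi2024faithful}, with only \eqref{Eq:ODE_base}--\eqref{Eq1} and \cref{T:InerInt} supplied as context. Your partial-fraction computation of $e^{-G(t)}$ and, in particular, the observation that in mode $T^{\downarrow\downarrow}_{+}$ one has $G'(s)=U(s)/\vdd$ (so the inhomogeneous integral collapses to $\vdd(1-e^{-G(t)})$) are precisely what that cited proof must carry out.

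One small verbal slip worth fixing: in the four single-transition modes the reason $R_1(t)+R_2(t)=\infty$ is not uniform. For $T^{\uparrow}_{\pm}$ the relevant \pmos\ is being switched \emph{off} (instantaneously, since $\beta=\infty$) while its series partner is still \emph{on}; for $T^{\downarrow}_{\pm}$ the relevant \pmos\ is switching \emph{on} while its partner remains \emph{off}. Either way the series sum is infinite, so $U\equiv 0$ and $I_1\equiv 0$ as you claim, and the rest of your argument goes through unchanged.
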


The trajectories $V_{out}^{T^{\uparrow \uparrow}_{+}}(t)$, $V_{out}^{T^{\uparrow \uparrow}_{-}}(t)$
and $V_{out}^{T^{\downarrow \downarrow}_{+}}(t)$, $V_{out}^{T^{\downarrow \downarrow}_{-}}(t)$ in
\cref{thm:MISOuttraj} are specifically tailored to facilitate the computation of
the MIS delays.
For example, the formulas \cref{outsig2} and \cref{SoughtOutput}
(for $\Delta \geq 0$) have been determined according to the following two procedures:
\begin{itemize}
\item[(i)] Compute $V_{out}^{T^{\uparrow}_{-}} (\Delta)$ for the first transition $(0,0)\to(1,0)$
  and use it as the initial value for $V_{out}^{T^{\uparrow \uparrow}_{+}}(t)$ governing the second
  transition $(1,0)\to(1,1)$; the ultimately sought MIS delay $\delta_{M,+}^{\downarrow}(\Delta)$ is
  the time (measured from the first transition until either the first or the second trajectory
  crosses the threshold voltage $\vdd/2$ from
  above when the first one starts from $V_{out}^{T^{\uparrow}_{-}} (0)=\vdd$.
\item[(ii)] Compute $V_{out}^{T^{\downarrow}_{-}}(\Delta)$ for the first transition $(1,1)\to(0,1)$
  and use it as the initial value for
  $V_{out}^{T^{\downarrow \downarrow}_{+}}(t)$ governing the second transition $(0,1)\to(0,0)$;
  the ultimately sought MIS delay $\delta_{M,+}^{\uparrow}(\Delta)$
  is the time (measured from the second transition) until the second trajectory
  crosses the threshold voltage $\vdd/2$ from below when the first one starts from
  $V_{out}^{T^{\downarrow}_{-}}(0)=0$.
\end{itemize}

Clearly, actually determining these MIS delays requires inverting the appropriate
trajectory formulas, which turned out to be easy in the rising
input transition case (i) but difficult in the falling input transition case (ii). In \cite{ferdowsi2023accurate}, a
somewhat complicated piecewise approximation (in terms of $\Delta$) of both the trajectory and, hence, the 
corresponding delay formula was used (these approximations also formed the basis for the preliminary version
\cite{FFSS23:DSD} of the present paper). In \cite{ferdowsi2024faithful}, however, an explicit 
trajectory formula was found also for the falling input transition case. According to
\cref{SoughtOutput}, $\delta_{M,+}^{\uparrow}(\Delta)$ is the solution (in $t$) of the
implicit function $I(t,\Delta)=0$, where

{
\footnotesize
\begin{align}
I(t,\Delta)=e^{\frac{-t}{2RC}} \Bigl(1+\frac{2t}{d+\sqrt{\chi}}\Bigr)^{\frac{-A+a}{2RC}} \Bigl(1+\frac{2t}{d-\sqrt{\chi}}\Bigr)^{\frac{A}{2RC}} -\frac{1}{2}.
\label{BiVarFunction}
\end{align}
}

Unfortunately, since $\lim_{\Delta\to 0} A =0$ and $\lim_{\Delta\to 0} (d-\sqrt{\chi}) =0$, the point $(t,\Delta)=(0,0)$ is
singular for $I(t,\Delta)=0$, so solving the latter for $t=\delta(\Delta)$ by means of the implicit function theorem is impossible.
However, the bootstrapping method \cite{deB70} was successfully employed for developing an accurate asymptotic expansion of $\delta(\Delta)$ for $\Delta \to 0$. 
\cref{thm:MISdelay} provide the resulting MIS delay formulas for an isolated two-input \NOR\ gate,
for both rising and falling input transitions as well as positive and negative $\Delta$.

\begin{thm}[MIS delay functions for the \NOR\ gate {\cite[Theorems~6.4 and 6.5]{ferdowsi2024faithful}}] \label{thm:MISdelay}
  For any $0 \leq |\Delta| \leq \infty$, the MIS delay functions of our model for the rising and falling input transitions are respectively given by
  
{\footnotesize
\begin{align}
& \delta_{M,+}^{\downarrow}(\Delta) =  \nonumber \\ 
& \begin {cases}
 \frac{\log(2)CR_{n_A}R_{n_B} - \Delta R_{n_B}}{R_{n_A}+R_{n_B}} + \Delta + \delta_{min} &   \ \ 0 \leq \Delta < \log(2)CR_{n_A} \\ 
 \log(2)CR_{n_A}  + \delta_{min} &   \ \ \Delta \geq \log(2)CR_{n_A}
\end {cases} \label{FallingmisdelayformulaNOR_pos}
\end{align}
\begin{align}
&\delta_{M,-}^{\downarrow}(\Delta) = \nonumber \\ 
& \begin{cases}
 \frac{\log(2)CR_{n_A}R_{n_B} + |\Delta| R_{n_A}}{R_{n_A}+R_{n_B}} + |\Delta| + \delta_{min} &   \ \ |\Delta| < \log(2)CR_{n_B} \\ 
\log(2)CR_{n_B}  + \delta_{min} &   \ \ |\Delta| \geq \log(2)CR_{n_B}
\end {cases} \label{FallingmisdelayformulaNOR_neg}
\end{align}
}
{\footnotesize
\begin{align}
\delta_{M,+}^{\uparrow}(\Delta) &= \begin {cases}
\delta_{0} - \frac{\alpha_1}{\alpha_1+\alpha_2} \Delta + \delta_{min}  &   \ \ 0 \leq \Delta < \frac{(\alpha_1+\alpha_2)(\delta_{0} - \delta_{\infty})}{\alpha_1}   \\ 
\delta_{\infty} + \delta_{min} &   \ \ \Delta \geq \frac{(\alpha_1+\alpha_2)(\delta_{0} - \delta_{\infty})}{\alpha_1}
\end {cases}\label{Risingmisdelayformula}
\\
\delta_{M,-}^{\uparrow}(\Delta) &= \begin {cases}
\delta_{0} - \frac{\alpha_2}{\alpha_1+\alpha_2} |\Delta| + \delta_{min} &   \ \ 0 \leq |\Delta| < \frac{(\alpha_1+\alpha_2)(\delta_{0} - \delta_{-\infty})}{\alpha_2}   \\ 
\delta_{-\infty} + \delta_{min} &   \ \ |\Delta| \geq \frac{(\alpha_1+\alpha_2)(\delta_{0} - \delta_{-\infty})}{\alpha_2}
\end {cases}\label{Risingmisdelayformulaminus}
\end{align}
}
where
{\footnotesize
\begin{align}
\delta_{0} &= - \frac{\alpha_1 + \alpha_2}{2R} \Bigl[ 1+ W_{-1}\Bigl(\frac{-1}{e \cdot 2^{\frac{4R^2C}{\alpha_1+ \alpha_2}}}\Bigr) \Bigr],  \label{eq:delta0} \\
\delta_{\infty}&= -\frac{\alpha_2}{2R} \Bigl[ 1+ W_{-1}\Bigl(\frac{-1}{e \cdot 2^{\frac{4R^2C}{\alpha_2}}}\Bigr) \Bigr],  \label{eq:deltainf} \\
\delta_{-\infty}&= -\frac{\alpha_1}{2R} \Bigl[ 1+ W_{-1}\Bigl(\frac{-1}{e \cdot 2^{\frac{4R^2C}{\alpha_1}}}\Bigr) \Bigr]. \label{eq:deltaminf}
\end{align}
}
Herein, $y=W_{-1}(x)$ is the non-principal real branch of the Lambert $W$ function (that solves $ye^y=x$ for $y \leq -1$).
\end{thm}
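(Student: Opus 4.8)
The plan is to split the four delay formulas into an elementary group — the falling-output delays \eqref{FallingmisdelayformulaNOR_pos}–\eqref{FallingmisdelayformulaNOR_neg}, obtained by inverting the pure exponentials of \cref{thm:MISOuttraj} — and a harder group — the rising-output delays \eqref{Risingmisdelayformula}–\eqref{Risingmisdelayformulaminus}, obtained by an asymptotic analysis of the implicit equation \eqref{BiVarFunction}. The offset $\dmin$ enters every branch only additively, so I would drop it throughout and reinstate it at the end.

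For $\delta_{M,+}^{\downarrow}(\Delta)$, the output follows $V_{out}^{T^{\uparrow}_{-}}$ (from $\vdd$, time constant $CR_{n_{A}}$) until the second input switches at time $\Delta$, then $V_{out}^{T^{\uparrow\uparrow}_{+}}$, whose exponent combines as $1/(CR_{n_{A}})+1/(CR_{n_{B}})$. I would first observe that the first exponential reaches $\vdd/2$ at time $\log(2)CR_{n_{A}}$: if $\Delta\ge\log(2)CR_{n_{A}}$ the crossing occurs already in the first mode, giving the constant branch; otherwise, using $V_{out}^{T^{\uparrow}_{-}}(\Delta)=\vdd e^{-\Delta/(CR_{n_{A}})}$ as the initial value in \eqref{outsig2} and solving $\Delta/(CR_{n_{A}})+\bigl(1/(CR_{n_{A}})+1/(CR_{n_{B}})\bigr)t_{2}=\log 2$ for $t_{2}$, then adding $\Delta$, yields the linear branch of \eqref{FallingmisdelayformulaNOR_pos}. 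One checks $t_{2}\ge 0 \iff \Delta<\log(2)CR_{n_{A}}$ and that the two branches agree at the cut point, so the function is continuous; \eqref{FallingmisdelayformulaNOR_neg} is the identical computation with $A\leftrightarrow B$ via \eqref{outsig1neg} and \eqref{outsig2_neg}.

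For the rising-output case, since $V_{out}^{T^{\downarrow}_{-}}(0)=0$ forces $V_{out}^{T^{\downarrow}_{-}}(\Delta)=0$ by \eqref{eq:FirstFalltheorem}, trajectory \eqref{SoughtOutput} equals $\vdd$ times $1$ minus its bracketed factor, so $\delta(\Delta)$ solves $I(\delta(\Delta),\Delta)=0$. Stage one is the three boundary values. At $\Delta=0$ we have $d=a=\tfrac{\alpha_{1}+\alpha_{2}}{2R}$, $c'=0$, $\chi=d^{2}$, $A=0$, and the factor $\bigl(1+2t/(d-\sqrt{\chi})\bigr)^{A/(2RC)}$ tends to $1$ (its logarithm is $O(\Delta^{2}\log\tfrac1\Delta)$, since $A=O(\Delta^{2})$ while $d-\sqrt{\chi}$ is of order $\Delta$); the crossing condition collapses to $\tfrac12=e^{-t/(2RC)}(1+t/a)^{a/(2RC)}$, and the substitution $u=1+t/a$ turns it into $ue^{-u}=e^{-1}2^{-4R^{2}C/(\alpha_{1}+\alpha_{2})}$, whose solution with $u>1$ is $u=-W_{-1}(\cdot)$, i.e.\ \eqref{eq:delta0}; the limits $\Delta\to\pm\infty$ are the same computation after noting that the already-switched $\pmos$ resistor has saturated to $R_{p_{A}}$ (resp.\ $R_{p_{B}}$), i.e.\ $\alpha_{1}=0$ (resp.\ $\alpha_{2}=0$), which gives \eqref{eq:deltainf} and \eqref{eq:deltaminf}. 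Stage two is the slope: I would feed the ansatz $t=\delta(\Delta)=\delta_{0}+\delta_{1}\Delta+o(\Delta)$ into $I(t,\Delta)=0$ and bootstrap in the style of \cite{deB70}, expanding $\sqrt{\chi}=a+\tfrac{a-2b}{a}\Delta+O(\Delta^{2})$ with $b=\tfrac{\alpha_{2}}{2R}$ (so $A=O(\Delta^{2})$ and $d\pm\sqrt{\chi}$ to the needed order), discarding the singular factor as above, and linearising $\log I=0$ to $-t+a\log(1+t/a)-\tfrac{(a-b)\,t\,\Delta}{a(a+t)}=-2RC\log 2+o(\Delta)$; implicit differentiation at $(t,\Delta)=(\delta_{0},0)$ gives $\delta_{1}=-\tfrac{a-b}{a}=-\tfrac{\alpha_{1}}{\alpha_{1}+\alpha_{2}}$, the slope in \eqref{Risingmisdelayformula}. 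Finally, since $\delta(\Delta)$ is monotone decreasing and becomes insensitive to $\Delta$ once $\tfrac{\alpha_{1}}{t+\Delta}$ is negligible in $R_{1}(t)$, the linear piece is capped at $\delta_{\infty}$, which it reaches exactly at $\Delta=(\alpha_{1}+\alpha_{2})(\delta_{0}-\delta_{\infty})/\alpha_{1}$; \eqref{Risingmisdelayformulaminus} is symmetric under $\alpha_{1}\leftrightarrow\alpha_{2}$, $\Delta\to|\Delta|$. Reinstating $\dmin$ in each branch finishes the proof.

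The main obstacle is making stage two rigorous — controlling the bootstrapping near the genuinely singular point $(t,\Delta)=(0,0)$ of $I$. One has to justify, uniformly for $t$ in a neighbourhood of $\delta_{0}>0$, that the factor $\bigl(1+2t/(d-\sqrt{\chi})\bigr)^{A/(2RC)}$ contributes nothing at linear order, and that the dropped remainders are truly $o(\Delta)$ rather than $O(\Delta\log\tfrac1\Delta)$; the decisive point is that $A=O(\Delta^{2})$ dominates the $\log\tfrac1\Delta$ coming from $d-\sqrt{\chi}=\Theta(\Delta)$. The saturation claim additionally needs a short monotonicity argument in $\Delta$. Everything else — the elementary group and the Lambert-$W$ boundary values — is routine bookkeeping with exponential inversions and the identity $ye^{y}=x\Leftrightarrow y=W_{-1}(x)$ on the branch $y\le-1$.
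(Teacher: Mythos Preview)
Your proposal is correct and follows essentially the same approach as the paper. Note that the paper itself does not give a self-contained proof of \cref{thm:MISdelay} (it is quoted from \cite{ferdowsi2024faithful}), but your two-stage strategy matches exactly what the paper sketches in the surrounding discussion and makes explicit in the proofs of \cref{Corol:1} and \cref{thm:gatechar}: inverting the pure exponentials of \cref{thm:MISOuttraj} for the falling-output delays, reducing the $\Delta=0,\pm\infty$ boundary values to $ue^{-u}=\mathrm{const}$ and hence to $W_{-1}$, and obtaining the linear slope near $\Delta=0$ by bootstrapping \`a la \cite{deB70} after recognising that $A=O(\Delta^{2})$ kills the singular factor $\bigl(1+2t/(d-\sqrt{\chi})\bigr)^{A/(2RC)}$ at first order.
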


\section{An Accurate Hybrid MIS Delay Model for Interconnected \NOR\ Gates}
\label{Sec:IntModel}

In this section, we will extend the thresholded hybrid model for an isolated \NOR\ gate surveyed in \cref{sec:background}
by adding a simple interconnect.
State-of-the-art interconnect modeling usually breaks up wires into segments,
each of which is characterized by some lumped model,
typically of $\Pi$, T, and RC type \cite{jan2003digital}, as depicted in
\cref{fig:Interconnect Models}.

\begin{figure}[h]
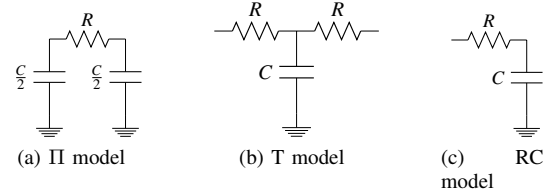

  \centering
  \subfloat[$\Pi$ model]{
    \includegraphics[width=0.20\linewidth]{\figPath{InterconnectModelgeneratorP.pdf}}%
    \label{fig:Interconnect ModelsPi}}
  \hfil
  \subfloat[T model]{
    \includegraphics[width=0.25\linewidth]{\figPath{InterconnectModelgeneratorT.pdf}}%
    \label{fig:Interconnect ModelsT}}
      \hfil
  \subfloat[RC model]{
    \includegraphics[width=0.14\linewidth]{\figPath{InterconnectModelgeneratorRC.pdf}}%
    \label{fig:Interconnect ModelsRC}}
  \caption{\small\em Lumped models for wires.}\label{fig:Interconnect Models}
\end{figure}

Since our focus is on exploring the modeling accuracy achievable with
first-order hybrid models, to preserve analytic solvability, we restrict our attention to the
lumped RC model,
as shown in \cref{fig:Interconnect ModelsRC}: Although this model is known to be less accurate than e.g.\
the $\Pi$ model in static timing analysis, it is the only one
that can be added to the gate model of \cite{ferdowsi2023accurate} without considerably increasing the complexity and turning
it into a second-order model: Adding a $\Pi$ model would add another
state-holding stage (capacitor) and hence raise the dimension of the
ODE systems to 2.

\subsection{Interconnect extension}

Integrating \cref{fig:Interconnect ModelsRC} into the \NOR\ gate results in the schematics 
shown in \cref{FigureNOR-GATE_Int}. By applying Kirchhoff's rules, the non-homogeneous ODE 

{\footnotesize
\begin{equation}
\label{EqIC0}
\frac{\dd V_{out}(t)}{\dd t}=-\frac{V_{out}(t)}{C\,R_g(t)(\frac{R_5}{R_g(t)}+1)}+\frac{\vdd}{C(R_1(t)+R_2(t))(\frac{R_5}{R_g(t)}+1)},
\end{equation}
}
is easily derived. Note that \cref{EqIC0} is just \cref{Eq:ODE_base} with the additional factor $f(t)=1/(\frac{R_5}{R_g(t)}+1)$ in all terms except $\frac{\dd V_{out}(t)}{\dd t}$. Since this non-constant factor $f(t)$ makes solving this ODE explicitly
very hard, if at all possible, we decided to take the ``easy route'' of approximating $f(t)$ by a constant value $F$: This allows us to re-use the solutions obtained for \cref{Eq:ODE_base}, by just replacing $C$ with $\frac{C}{F}$ in all output voltage trajectory formulas. 
In order to reduce the approximation error, however, we use different values of $F$ in different modes.

Recall that, in the original hybrid model, each mode switch enables some specific ODE system, the solution of which gives the respective trajectory. Fortunately, as can be observed in
\cref{tab:T1},
all transitions except $(0,1) \rightarrow (0,0)$ and $(1,0) \rightarrow (0,0)$ lead to a constant value for $R_g(t)$ and hence $F$ a priori. Consequently, for those six transitions, we can safely substitute $f(t)$ by the appropriate constant value. 

Unfortunately, this is not the case for the transitions $(0,1) \rightarrow (0,0)$ and $(1,0) \rightarrow (0,0)$, though, so replacing $f(t)$ by some constant value introduces some approximation error.
Fortunately, the time span during which $R_g(t)$ varies significantly here is very small. Moreover, its variability is not very large either: In particular, as the switch-on of a transistor is fast, one may reasonably conjecture that replacing $f(t)$ by $1/(\frac{R_5}{R_{g_{min}}}+1)$ should lead to a good approximation; and indeed, the results
of our validation experiments in \cref{Sec:Result} will confirm this conjecture. The fact that $1/R_{g_{min}} = 1/(2R)$ follows from
\cref{tab:T1}, which reveals that $(0,1) \rightarrow (0,0)$ resp. $(1,0) \rightarrow (0,0)$ leads to $1/R_g(t)=1/(\frac{\alpha_1}{t+ \Delta}+ \frac{\alpha_2}{t}+2R)$ resp. $=1/(\frac{\alpha_1}{t}+ \frac{\alpha_2}{t+ \Delta}+2R)$. 
\cref{table:IC} summarizes all exact and approximate values of $R_g(t)$ and $F$ corresponding to each mode switch.

\begin{table}[t]
\centering
\caption{\small\em Input mode switching and the resulting values for $R_g(t)$ and the corresponding approximation $F$ for $f(t)$.}
\label{table:IC}
\scalebox{0.75}{
\begin{tabular}{ccccc}
\hline
MS                        &  & $R_g(t)$                                   &  & $F$                                                        \\ \cline{1-1} \cline{3-3} \cline{5-5} 
$(0,0) \rightarrow (1,0)$ and $(1,1) \rightarrow (1,0)$ &  & $=R_{n_A}$                                 &  & $=\frac{R_{n_A}}{R_5 + R_{n_A}}$                                \\
$(1,0) \rightarrow (1,1)$ and $(0,1) \rightarrow (1,1)$ &  & $=\frac{R_{n_A}R_{n_B}}{R_{n_A}+ R_{n_B}}$ &  & $= \frac{R_{n_A}R_{n_B}}{R_5(R_{n_A}+R_{n_B})+ R_{n_A}R_{n_B}}$ \\
$(0,0) \rightarrow (0,1)$ and $(1,1) \rightarrow (0,1)$ &  & $=R_{n_B}$                                 &  & $=\frac{R_{n_B}}{R_5 + R_{n_B}}$                                \\
$(0,1) \rightarrow (0,0)$ and $(1,0) \rightarrow (0,0)$&  & $\approx 2R$                               &  & $\approx \frac{2R}{R_5+ 2R}$                                    \\ \hline
\end{tabular}}
\end{table} 

The results of the above discussion lead to the following \cref{Corol:1},
which gives the delay predictions of our interconnect-augmented model. It is identical
to \cref{thm:MISdelay}, except that $C$ is replaced by $C/F$ for the appropriate
transitions (as determined by the procedure (i) and (ii) in \cref{sec:background}), 
where $F$ is given in \cref{table:IC}.

\begin{corol}[MIS delay functions for the interconnect-augmented \NOR\ gate]\label{Corol:1}
For any $0 \leq |\Delta| \leq \infty$, the MIS delay functions of our interconnect-augmented model for rising and falling input transitions are respectively given by

{\footnotesize
\begin{align}
& \delta_{M,+}^{\downarrow}(\Delta) =  \nonumber \\ 
& \begin {cases} 
 \frac{\log(2)C_2R_{n_A}R_{n_B} - \frac{C_2}{C_1} \Delta R_{n_B}}{R_{n_A}+R_{n_B}} + \Delta + \delta_{min} &   \ \ 0 \leq \Delta < \log(2)C_1R_{n_A}  \\ 
 \log(2)C_1R_{n_A} + \delta_{min} &   \ \ \Delta \geq \log(2)C_1R_{n_A}
\end {cases} \label{Fallingmisdelayformula_int}
\end{align}
\begin{align}
& \delta_{M,-}^{\downarrow}(\Delta) = \nonumber \\ 
&  \begin{cases}
 \frac{\log(2)C_2R_{n_A}R_{n_B} + \frac{C_2}{C_1'} |\Delta| R_{n_A}}{R_{n_A}+R_{n_B}} + |\Delta| + \delta_{min} &   \ \ |\Delta| < \log(2)C_1^{'}R_{n_B}  \\ 
\log(2)C_1^{'}R_{n_B} + \delta_{min} &   \ \ |\Delta| \geq \log(2)C_1^{'}R_{n_B}
\end {cases} \label{Fallingmisdelayformulaminus_int}
\end{align}
}
{\footnotesize
\begin{align}
\delta_{M,+}^{\uparrow}(\Delta) &= \begin {cases}
\delta_{0} - \frac{\alpha_1}{\alpha_1+\alpha_2} \Delta + \delta_{min}  &   \ \ 0 \leq \Delta < \frac{(\alpha_1+\alpha_2)(\delta_{0} - \delta_{\infty})}{\alpha_1}   \\ 
\delta_{\infty} + \delta_{min} &   \ \ \Delta \geq \frac{(\alpha_1+\alpha_2)(\delta_{0} - \delta_{\infty})}{\alpha_1}
\end {cases}\label{Risingmisdelayformula_int}
\end{align}
\begin{align}
\delta_{M,-}^{\uparrow}(\Delta) &= \begin {cases}
\delta_{0} - \frac{\alpha_2}{\alpha_1+\alpha_2} |\Delta| + \delta_{min} &   \ \ 0 \leq |\Delta| < \frac{(\alpha_1+\alpha_2)(\delta_{0} - \delta_{-\infty})}{\alpha_2} \\ 
\delta_{-\infty} + \delta_{min} &   \ \ |\Delta| \geq \frac{(\alpha_1+\alpha_2)(\delta_{0} - \delta_{-\infty})}{\alpha_2}
\end {cases}\label{Risingmisdelayformulaminus_int}
\end{align}
}
where
{\small
\begin{align}
\delta_{0} &= - \frac{\alpha_1 + \alpha_2}{2R} \Bigl[ 1+ W_{-1}\Bigl(\frac{-1}{e \cdot 2^{\frac{4R^2C_3}{\alpha_1+ \alpha_2}}}\Bigr) \Bigr],  \label{eq:delta0_int}
\end{align}
\begin{align}
\delta_{\infty}&= -\frac{\alpha_2}{2R} \Bigl[ 1+ W_{-1}\Bigl(\frac{-1}{e \cdot 2^{\frac{4R^2C_3}{\alpha_2}}}\Bigr) \Bigr],  \label{eq:deltainf_int}
\end{align}
\begin{align}
\delta_{-\infty}&= -\frac{\alpha_1}{2R} \Bigl[ 1+ W_{-1}\Bigl(\frac{-1}{e \cdot 2^{\frac{4R^2C_3}{\alpha_1}}}\Bigr) \Bigr], \label{eq:deltaminf_int}
\end{align}
\begin{align}
C_1&=\frac{C(R_5+R_{n_{A}})}{R_{n_{A}}}, \label{eq:C_1_int}
\end{align}
\begin{align}
C_1^{'} &=\frac{C(R_5+R_{n_{B}})}{R_{n_{B}}}, \label{eq:C_1_prime_int}
\end{align}
\begin{align}
C_2 &=\frac{C\bigl(R_5(R_{n_{A}}+R_{n_{B}})+R_{n_{A}}R_{n_{B}}\bigl)}{R_{n_{A}}R_{n_{B}}}, \label{eq:C_2_int}
\end{align}
\begin{align}
C_3&=\frac{C(R_5+2R)}{2R}. \label{eq:C_3_int}
\end{align}
}
\end{corol}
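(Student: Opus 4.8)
The plan is to derive \cref{Corol:1} directly from \cref{thm:MISdelay} by tracking, mode by mode, how the interconnect-induced substitution $C \mapsto C/F$ propagates through the delay formulas. The key structural observation, already made in the text, is that \cref{EqIC0} is exactly \cref{Eq:ODE_base} with every term on the right-hand side multiplied by $f(t) = 1/(R_5/R_g(t) + 1)$. Once $f(t)$ is replaced by a mode-dependent constant $F$, the ODE becomes $\dot V_{out} = -V_{out}/(C/F \cdot R_g) + (V_{DD}/(C/F \cdot (R_1+R_2)))$, i.e. literally \cref{Eq:ODE_base} with $C$ replaced by $C/F$. Hence every trajectory formula in \cref{thm:MISOuttraj}, and therefore every delay formula in \cref{thm:MISdelay}, transfers verbatim under $C \mapsto C/F$, provided we use the correct $F$ for the mode(s) that actually determine the delay in question.

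Next I would carry out the bookkeeping separately for the four delay functions. For $\delta_{M,+}^{\downarrow}(\Delta)$: procedure~(i) of \cref{sec:background} says the relevant trajectories are $V_{out}^{T_-^\uparrow}$ (mode $(0,0)\to(1,0)$, where $R_g = R_{n_A}$, so $F = R_{n_A}/(R_5+R_{n_A})$) for the first segment and $V_{out}^{T_+^{\uparrow\uparrow}}$ (mode $(1,0)\to(1,1)$, where $R_g = R_{n_A}R_{n_B}/(R_{n_A}+R_{n_B})$, so $F$ is the second line of \cref{table:IC}) for the second segment. Substituting $C \mapsto C/F$ with these two values into \cref{FallingmisdelayformulaNOR_pos} turns $CR_{n_A}$ into $C_1 R_{n_A}$ with $C_1$ as in \cref{eq:C_1_int}, and turns the $CR_{n_A}R_{n_B}$ and the $\Delta R_{n_B}$ terms into the $C_2$-weighted and $C_2/C_1$-weighted versions of \cref{Fallingmisdelayformula_int} — here one must be careful that the first-segment slope in $\Delta$ picks up the ratio of the two capacitances (the $\Delta$ coming from the first mode's $C_1$, the rest from the second mode's $C_2$), which is exactly the origin of the factor $C_2/C_1$. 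The symmetric computation with $R_{n_A}\leftrightarrow R_{n_B}$ yields $\delta_{M,-}^{\downarrow}$ and $C_1'$. For $\delta_{M,\pm}^{\uparrow}$: procedure~(ii) says the delay-determining second segment is the mode $(0,1)\to(0,0)$ (resp.\ $(1,0)\to(0,0)$), for which $R_g(t)$ is genuinely time-varying; following the text we approximate $R_g(t) \approx R_{g_{min}} = 2R$, giving $F = 2R/(R_5+2R)$ and hence $C_3 = C(R_5+2R)/(2R)$ as in \cref{eq:C_3_int}. Plugging $C \mapsto C_3/1$, i.e. replacing $C$ by $C_3$, wait — one must note the substitution is $C \mapsto C/F = C_3$, directly into \cref{eq:delta0}–\cref{eq:deltaminf} yields \cref{eq:delta0_int}–\cref{eq:deltaminf_int}; and since $\delta_0,\delta_\infty,\delta_{-\infty}$ also appear as the breakpoints and plateau values in \cref{Risingmisdelayformula}–\cref{Risingmisdelayformulaminus}, the rising-input delay formulas \cref{Risingmisdelayformula_int}–\cref{Risingmisdelayformulaminus_int} follow with no further change (the $\alpha$-ratios and the additive $\delta_{min}$ are unaffected by the interconnect).

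The main obstacle, and the only genuinely non-mechanical point, is the falling-input case: there $R_g(t)$ is not constant, so the substitution $C \mapsto C/F$ with a single constant $F$ is an \emph{approximation}, not an identity, and the claim implicitly asserts that replacing $f(t)$ by $1/(R_5/R_{g_{min}}+1)$ is the ``right'' choice. The justification I would give mirrors the heuristic already in the text — the interval on which $R_g(t)$ departs appreciably from $2R$ is short (the $\alpha_1/(t+\Delta) + \alpha_2/t$ terms decay fast once the \pmos{} transistors turn on) and the total variation of $f(t)$ over that interval is small, so the error incurred by freezing $f$ at its extremal value $f(R_{g_{min}})$ is second order — with the quantitative vindication deferred to the accuracy experiments in \cref{Sec:Result}. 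Everything else is a routine, if slightly tedious, algebraic substitution, so I would present the proof as: (1) state the $C\mapsto C/F$ reduction from \cref{EqIC0}; (2) read off the mode-to-$F$ map from \cref{table:IC}; (3) apply it to each of the four delay formulas of \cref{thm:MISdelay} following procedures~(i) and~(ii), collecting the composite capacitances $C_1, C_1', C_2, C_3$; and (4) note that the falling-input step uses the $R_{g_{min}}$ approximation whose adequacy is confirmed empirically.
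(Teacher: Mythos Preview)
Your proposal is correct and follows essentially the same approach as the paper: both proofs reduce to the $C \mapsto C/F$ substitution, read off the mode-dependent $F$ from \cref{table:IC} via procedures~(i) and~(ii), and track how the two different effective capacitances in the first and second segments of the $\delta_{M,+}^{\downarrow}$ derivation produce the $C_2/C_1$ factor, while the $\delta_{M,\pm}^{\uparrow}$ case collapses to a single $C\mapsto C_3$ substitution. The paper's proof is marginally more concrete in explicitly writing out the two trajectory equations \cref{outsig1} and \cref{outsig2} with $C_1$ and $C_2$ inserted and then inverting them, whereas you describe this step verbally, but there is no substantive difference.
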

\begin{proof}

We give the proof for $\delta_{M,+}^{\downarrow}(\Delta)$ only; the expression for $\delta_{M,-}^{\downarrow}(\Delta)$ can be derived from the former by exchanging $R_{n_A}$ and $R_{n_B}$ and replacing $\Delta$ by $|\Delta|$. First, consider the transition $(1,0) \rightarrow (1,1)$ and the corresponding trajectory $V_{out}^{T^{\uparrow \uparrow}{+}}(t)$ in procedure (i) stated after \cref{thm:MISOuttraj}. According to \cref{table:IC}, it is just \cref{outsig2} with $C$ replaced by $C_2$ here, i.e.,
\begin{equation}
  V_{out}^{T^{\uparrow \uparrow}_{+}}(t)= V_{out}^{T^{\uparrow}_{-}} (\Delta)  e^{- \bigl(\frac{1}{C_2R_{n_A}}+\frac{1}{C_2R_{n_B}}\bigr)t}.\label{outsig2withC2}
\end{equation}
This trajectory must start from the initial value
\begin{equation}
  V_{out}^{T^{\uparrow}_{-}}(\Delta) = V_{out}^{T^{\uparrow}_{-}}(0) e^{\frac{-\Delta}{C_1 R_{n_{A}}}},\label{outsig1withC1}
\end{equation}
  associated with the transition $(0,0) \rightarrow (1,0)$, which results from replacing $C$ by $C_1$ in \cref{outsig1}. The latter, in turn, starts from $V_{out}^{T^{\uparrow}_{-}}(0)=\vdd$. The goal is to determine the time $\delta_{M,+}^\downarrow(\Delta)$ when $\vdd/2$ is reached from above either by (a) the first trajectory $V_{out}^{T^{\uparrow}_{-}}(t)$ if $\Delta$ is large, or (b) by $V_{out}^{T^{\uparrow \uparrow}_{+}}(t)$ itself (which commences at time $\Delta$, i.e., $t=0$ corresponds to $\Delta$ here) if $\Delta$ is small enough. Given that both trajectories involve only a single exponential function, they are straightforward to invert. From \cref{outsig1withC1}, it is evident that case (a) occurs for $\Delta \geq \log(2)C_1R_{n_A}$, while \cref{outsig2withC2} governs case (b) for smaller values of $\Delta$. It is not hard to confirm that this gives raise
  to \cref{Fallingmisdelayformula_int}, which differs from \cref{FallingmisdelayformulaNOR_pos} only in that
  $\Delta R_{n_B}$ in the numerator for case (b) has been replaced by $\frac{C_2}{C_1}\Delta R_{n_B}$. 

  Obtaining $\delta_{M,+}^{\uparrow}(\Delta)$ is even simpler, since $V_{out}^{T^{\downarrow}_{-}}(\Delta)=0$ in
  procedure (ii), as it starts from $V_{out}^{T^{\downarrow}_{-}}(0)=0$ and follows \cref{eq:FirstFalltheorem}.
  Consequently, only $V_{out}^{T^{\downarrow \downarrow}_{+}}(t)$ starting from initial value 0 is relevant here.
  All that is needed is hence to replace $C$ by $C_3$ in \cref{SoughtOutput}. Consequently, the
  MIS delay formula \cref{Risingmisdelayformula} remains valid,
  provided $C$ is replaced by $C_3$ in \cref{eq:delta0}-\cref{eq:deltaminf}. This justifies
  \cref{Risingmisdelayformula_int} and \cref{eq:delta0_int}-\cref{eq:deltaminf_int}. The MIS
  delay formula \cref{Risingmisdelayformulaminus_int} for negative $\Delta$ follows from exchanging
  $\alpha_1$ and $\alpha_2$ and replacing $\Delta$ by $|\Delta|$ in \cref{Risingmisdelayformula_int}.
\end{proof}

\subsection{Model parametrization}
\label{Sec:Param}
For the applicability of \cref{Corol:1}, it is essential to have a practical procedure for model parameterization:
Given the \emph{extremal} MIS delay values of a real interconnected \NOR\ gates, namely $\ddoD_S(-\infty)$, $\ddoD_S(0)$, and $\ddoD_S(\infty)$ 
according to \cref{corFig3}, and $\dupD_S(-\infty)$, $\dupD_S(0)$, and $\dupD_S(\infty)$ according to \cref{corFig5},
we want to determine suitable values for the parameters $\alpha_1$, $\alpha_2$, $C$, $R$, $R_{n_A}$, $R_{n_B}$, and $R_5$
such that the MIS delays predicted by our model \emph{match} these values, in the sense that 
$\ddoD_{M,-}(-\infty)=\ddoD_S(-\infty)$, $\ddoD_{M,-}(0)=\ddoD_{M,+}(0)=\ddoD_S(0)$, $\ddoD_{M,+}(\infty)=\ddoD_S(\infty)$ and $\dupD_{M,-}(-\infty)=\dupD_S(-\infty)$,  $\dupD_{M,-}(0)=\dupD_{M,+}(0)=\dupD_S(0)$, $\dupD_{M,+}(\infty)=\dupD_S(\infty)$.

For the isolated \NOR\ gate model proposed in \cite{ferdowsi2023accurate}, matching parameter values could 
only be determined after adding some well-chosen minimal pure delay $\dmin>0$ to the model. Note carefully that
adding such a non-zero minimal pure delay to the model is also mandatory for making it \emph{causal}, see \cite{FFNS23:HSCC,ferdowsi2024faithful} for details. Thanks to the explicit delay formulas developed in 
\cite{ferdowsi2024faithful}, given in \cref{thm:MISdelay}, the least-squares fitting-based parametrization 
procedure used in \cite{ferdowsi2023accurate} could be replaced by explicit formulas for computing
the sought parameters. These formulas also allowed to compute the required value for $\dmin$, 
see \cite[Thm.~6.6]{ferdowsi2024faithful}. 

Since we cannot re-use these analytic parametrization formulas for our interconnect-augmented 
model directly, due to the additional parameter $R_5$, \cref{thm:gatechar} provides a suitably adapted parameterization procedure. Interestingly, the additional degree
of freedom provided by $R_5$ completely removed the need for a uniquely determined value of
the pure delay $\dmin$ as in \cite{ferdowsi2024faithful}. Indeed, it turned out that the new parametrization procedure works for almost any reasonable choice of $\dmin$, which we now use
for modeling an additional pure delay of the interconnect at the output. The model parameters
computed by our parametrization formulas below make sure that the given MIS delay values will
be matched. 
For determining the particular value of $\dmin$ for our validation in \cref{Sec:Result}, we used the procedure for measuring the pure delay of inverters (which, unlike $\ddoD_S(0)$ 
and $\dupD_S(0)$,
abstracts away the delay caused by the finite slope of the analog waveforms) proposed by 
Maier et~al.\ in \cite{maier2021composable}, by tying together inputs $A$ and $B$ of 
our \NOR\ gates.
It turned out, however, that the results are insensitive to the actual choices of $\dmin$.

\begin{thm}[Model parametrization for interconnect-augmented \NOR\ gates]\label{thm:gatechar}
Let $\dmin\geq 0$ be some interconnect pure delay and $\ddoD_S(-\infty)$,  $\ddoD_S(0)$, $\ddoD_S(\infty)$ and $\dupD_S(-\infty)$,  $\dupD_S(0)$, $\dupD_S(\infty)$ be the MIS delay values of a real interconnected \NOR\ gate that shall be matched by our model. Given an arbitrary chosen value $C$ for the load capacitance, this is accomplished by choosing the model parameters as follows:

\begin{flalign}
&R_5 =  \frac{\bigl( \ddoD_S(0) - \dmin - \epsilon \bigr)}{\log(2)C}   \label{eq:dmin}&
\end{flalign}
\begin{flalign}
&R_{n_{A}}=\frac{\ddoD_S(\infty)-\ddoD_S(0)+\epsilon}{\log(2) C} \label{eq:rna}&
\end{flalign}
\begin{flalign}
&R_{n_{B}}=\frac{\ddoD_S(-\infty)-\ddoD_S(0)+\epsilon}{ \log(2) C} \label{eq:rnb}&
\end{flalign}
\begin{flalign}
&\epsilon=\sqrt{\bigl(\ddoD_S(\infty)-\ddoD_S(0)\bigr)\bigl(\ddoD_S(-\infty)-\ddoD_S(0)\bigr)} &
\end{flalign}

Furthermore, using the function
{\scriptsize
\begin{flalign}
&A(t,R,R_5,C)=& \nonumber \\
&\frac{-2R \bigl(t-C(R_5+2R) \cdot \log(2) \bigr)}{W_{-1}\Bigl( \bigl(\frac{C(R_5+2R) \cdot \log(2)}{t}-1\bigr) e^{\frac{C(R_5+2R) \cdot \log(2)}{t}-1} \Bigl) +1 - \frac{C(R_5+2R) \cdot \log(2)}{t}}\label{eq:AtRC},&
\end{flalign}
}
determine $R$ by numerically\footnote{Whereas there might be a way to solve it analytically, we did not find it so far.} solving the equation 
{\small
\begin{flalign}
&A\bigl(\dupD_S(0)-\dmin,R,R_5,C\bigr)-A\bigl(\dupD_S(\infty)-\dmin,R,C\bigr)&\nonumber \\
& - A\bigl(\dupD_S(-\infty)-\dmin,R,C\bigr) = 0 \label{eq:forR}, &
\end{flalign}
}
and finally choose

\begin{flalign}
\alpha_1 &= A\bigl(\dupD_S(-\infty)-\dmin,R,R_5,C\bigr)\label{eq:alpha1},\\
\alpha_2 &= A\bigl(\dupD_S(\infty)-\dmin,R,R_5,C\bigr)\label{eq:alpha2}.
\end{flalign}
\end{thm}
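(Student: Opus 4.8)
The plan is to derive the parametrization formulas by imposing the six matching conditions on the explicit delay expressions of \cref{Corol:1}, splitting the work into the ``falling-output'' part (which determines $R_5$, $R_{n_A}$, $R_{n_B}$) and the ``rising-output'' part (which determines $R$, $\alpha_1$, $\alpha_2$), exactly as the delay formulas themselves decouple. For the falling-output part, I would evaluate \cref{Fallingmisdelayformula_int} at $\Delta=0$ and $\Delta=\infty$, and \cref{Fallingmisdelayformulaminus_int} at $|\Delta|=\infty$. At $\Delta=0$ both piecewise branches agree and give $\ddoD_{M,+}(0)=\frac{\log(2)C_2 R_{n_A}R_{n_B}}{R_{n_A}+R_{n_B}}+\dmin$; the saturated branches give $\ddoD_{M,+}(\infty)=\log(2)C_1 R_{n_A}+\dmin$ and $\ddoD_{M,-}(-\infty)=\log(2)C_1' R_{n_B}+\dmin$. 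Substituting the definitions \cref{eq:C_1_int}, \cref{eq:C_1_prime_int}, \cref{eq:C_2_int} of $C_1,C_1',C_2$ turns these into three equations in the three unknowns $R_5,R_{n_A},R_{n_B}$ (with $C$ and $\dmin$ given): namely $\ddoD_S(\infty)-\dmin=\log(2)C(R_5+R_{n_A})$, $\ddoD_S(-\infty)-\dmin=\log(2)C(R_5+R_{n_B})$, and $\ddoD_S(0)-\dmin=\log(2)C\bigl(R_5+\frac{R_{n_A}R_{n_B}}{R_{n_A}+R_{n_B}}\bigr)$. I would solve the first two for $R_{n_A}+R_5$ and $R_{n_B}+R_5$, subtract from the third to eliminate the parallel term, and after introducing the abbreviation $\epsilon$ verify by direct substitution that \cref{eq:dmin}, \cref{eq:rna}, \cref{eq:rnb} satisfy all three; the quadratic that $\epsilon$ solves is the reason $\epsilon$ appears as a square root, and I would note it is precisely $\epsilon=\sqrt{(\ddoD_S(\infty)-\ddoD_S(0))(\ddoD_S(-\infty)-\ddoD_S(0))}$, which is real since both factors are nonnegative by the monotonicity of the measured MIS delay (the extremes are the largest).

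For the rising-output part, the relevant delays \cref{Risingmisdelayformula_int}, \cref{Risingmisdelayformulaminus_int} at $\Delta=0,\pm\infty$ reduce to $\dupD_{M}(0)=\delta_0+\dmin$, $\dupD_{M,+}(\infty)=\delta_\infty+\dmin$, $\dupD_{M,-}(-\infty)=\delta_{-\infty}+\dmin$ with $\delta_0,\delta_\infty,\delta_{-\infty}$ given by the Lambert-$W$ expressions \cref{eq:delta0_int}--\cref{eq:deltaminf_int}, in which $C_3=\frac{C(R_5+2R)}{2R}$ now carries the (already determined) $R_5$ but an as-yet-unknown $R$. The idea is to invert the relation ``$\delta$ given $(\alpha,R,R_5,C)$'' to obtain $\alpha$ as a function of $\delta$; this is exactly what the function $A(t,R,R_5,C)$ in \cref{eq:AtRC} does. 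Concretely, writing $b=\frac{\alpha}{2R}$ and $k=\frac{4R^2C_3}{\alpha}=\frac{2R C(R_5+2R)}{\alpha}$, equation $\delta=-b(1+W_{-1}(-e^{-1}2^{-k}))$ can be solved for $\alpha$: exponentiating the defining relation $W e^{W}=-e^{-1}2^{-k}$ after substituting $W=-1-\delta/b=-1-\delta\cdot 2R/\alpha$ and simplifying $2^{-k}$ yields, after the algebra, $\alpha=A(\delta,R,R_5,C)$. I would carry out this inversion once and record it; the key algebraic check is that the $C(R_5+2R)\log 2$ groupings in \cref{eq:AtRC} are exactly $k\ln 2$ repackaged, so the $W_{-1}$ there really is the inverse of the $W_{-1}$ in \cref{eq:delta0_int}--\cref{eq:deltaminf_int}. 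Since $\delta_\infty$ uses $\alpha_2$ alone, $\delta_{-\infty}$ uses $\alpha_1$ alone, and $\delta_0$ uses $\alpha_1+\alpha_2$, applying $A$ to the three measured (minus $\dmin$) rising delays gives $\alpha_2=A(\dupD_S(\infty)-\dmin,R,R_5,C)$, $\alpha_1=A(\dupD_S(-\infty)-\dmin,R,R_5,C)$, and $\alpha_1+\alpha_2=A(\dupD_S(0)-\dmin,R,R_5,C)$; the consistency condition $\alpha_1+\alpha_2=(\alpha_1)+(\alpha_2)$ is precisely \cref{eq:forR}, a single scalar equation in the single remaining unknown $R$. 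I would argue it has a solution (e.g.\ by a continuity/intermediate-value argument as $R$ ranges over $(0,\infty)$, using the limiting behaviour of $A$) and then define $\alpha_1,\alpha_2$ by \cref{eq:alpha1}, \cref{eq:alpha2}, and $\delta_0$ by \cref{eq:delta0_int} with this $R$.

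Finally I would close the loop by substituting all chosen parameters back into \cref{Corol:1} and checking each of the six matching identities holds; most are immediate from the constructions above, and the only one needing a remark is $\dupD_{M,+}(0)=\dupD_{M,-}(0)$, which holds because both reduce to $\delta_0+\dmin$ by design. The main obstacle is the rising-output inversion: showing that solving $\delta = -\tfrac{\alpha}{2R}\bigl(1+W_{-1}(-e^{-1}2^{-4R^2C_3/\alpha})\bigr)$ for $\alpha$ yields exactly the closed form \cref{eq:AtRC} requires care with the $W_{-1}$ branch (one must verify the argument $\bigl(\tfrac{C(R_5+2R)\log 2}{\delta}-1\bigr)e^{\,\cdot}$ lands in the domain $[-e^{-1},0)$ of $W_{-1}$, i.e.\ that the bracket is negative, which corresponds to $\delta$ being below the saturated delay $\log(2)C(R_5+2R)$), and with the existence of a root of \cref{eq:forR} in $R$; everything on the falling-output side is routine linear/quadratic algebra once the substitutions from \cref{table:IC} are made.
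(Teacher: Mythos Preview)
Your proposal is correct and follows essentially the same route as the paper: match the three falling-output extremal delays to obtain $R_5$, $R_{n_A}$, $R_{n_B}$ via a quadratic, then invert the Lambert-$W$ delay expressions to define $A(\cdot)$ and determine $R$, $\alpha_1$, $\alpha_2$ from the rising-output consistency condition \cref{eq:forR}. The only presentational difference is that the paper derives $A$ by starting from the trajectory implicit equation \cref{BiVarFunction} at $\Delta=0$ and working through the substitutions $\omega=2R\delta_0$, $y=\alpha$, $\beta=e^{2R(\delta_0-C(R_5+2R)\log 2)}$, $z=1+\omega/y$, $\gamma=\log(\beta)/\omega$ to reach a $W_{-1}$-solvable form, rather than directly inverting the already-stated formulas \cref{eq:delta0_int}--\cref{eq:deltaminf_int} as you do; the algebra is equivalent.
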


\begin{proof}
The proof follows the general strategy of the proof of \cite[Thm.~6.6]{ferdowsi2024faithful}.
We first consider the parameters determined by the rising input transition 
case. 
To align the 
delay formulas in \cref{Corol:1} with the given extremal
delay values, we just plug in $\ddoD_S(-\infty)-\dmin$, $\ddoD_S(0)-\dmin$, and $\ddoD_S(\infty)-\dmin$ to obtain the following system of equations for our
sought parameters $\dmin$, $R_{n_{A}}$, $R_{n_{B}}$, and $R_5$:

{\small
\begin{flalign}
&\ddoD_S(0)-\dmin- \frac{\log(2) \cdot C_2  \cdot  R_{n_{A}} R_{n_{B}}}{R_{n_{A}}+R_{n_{B}}}=0 \label{Eq0:RnARnB} \\
& \ddoD_S(\infty)-\dmin- \log(2) \cdot C_1  \cdot R_{n_{A}}=0 \label{Eq0:RnA} \\
& \ddoD_S(-\infty)-\dmin- \log(2) \cdot C_1'  \cdot R_{n_{B}}=0 \label{Eq0:RnB}
\end{flalign}
Plugging \cref{eq:C_2_int} into \cref{Eq0:RnARnB} gives:
\begin{flalign}
&\ddoD_S(0)-\dmin = \frac{\log(2) \cdot C [R_5(R_{n_{A}}+R_{n_{B}}) + R_{n_{A}}R_{n_{B}}]}{R_{n_{A}}+R_{n_{B}}}, \nonumber
\end{flalign}
}
which leads to 
\begin{flalign}
&  \frac{\ddoD_S(0)-\dmin}{\log(2) C} = R_5 + \frac{R_{n_{A}}R_{n_{B}}}{R_{n_{A}}+R_{n_{B}}}, \nonumber
\end{flalign}
the reciprocal of which reads
\begin{flalign}
\frac{1}{R_{n_{A}}}+\frac{1}{R_{n_{B}}} &= \frac{\log(2) C}{\ddoD_S(0)-\dmin - \log(2)CR_5}. \label{Eq:RnARnB}
\end{flalign}
Following a similar approach for \cref{Eq0:RnA} and \cref{Eq0:RnB} and using \cref{eq:C_1_int} and \cref{eq:C_1_prime_int} leads to
\begin{flalign}
\frac{1}{R_{n_{A}}} &= \frac{\log(2) C}{\ddoD_S(\infty)-\dmin-\log(2)CR_5} \label{Eq:RnA}, \\
\frac{1}{R_{n_{B}}} &= \frac{\log(2) C}{\ddoD_S(-\infty)-\dmin-\log(2)CR_5} \label{Eq:RnB}.
\end{flalign}
Now, from \cref{Eq:RnARnB}, \cref{Eq:RnA}, and \cref{Eq:RnB}, it follows that
{\small
\begin{flalign}
&\frac{1}{\ddoD_S(0)-\dmin-\log(2)CR_5}= & \nonumber \\ 
&\frac{1}{\ddoD_S(\infty)-\dmin-\log(2)CR_5} + \frac{1}{\ddoD_S(-\infty)-\dmin-\log(2)CR_5}, \nonumber  &
\end{flalign}
}
which can be rewritten into a quadratic equation for $\dmin+\log(2)CR_5$. Choosing the
negative solution, which ensures that $\dmin+\log(2)CR_5 \leq \ddoD_S(0)$, provides
the expression for $R_5$ stated in \cref{eq:dmin}. Plugging it
into \cref{Eq:RnA} and \cref{Eq:RnB} gives us $R_{n_{A}}$ resp.\ $R_{n_{B}}$ according to \cref{eq:rna} resp.\ \cref{eq:rnb}.

Next, we focus on the parameters determined by the falling input transition case. We explain \cref{eq:AtRC} by considering $A\bigl(\dupD_S(0) - \dmin, R, R_5, C\bigr)$, which corresponds to setting $t = \delta_0 = \dupD_S(0) - \dmin$ as defined in \cref{eq:delta0_int}. We start out from the implicit function $I(t,\Delta)=0$ defined in
\cref{BiVarFunction} for $\Delta=0$ and $t=\delta_0$, which also causes $A=0$ and $\sqrt{\chi}=d=a=(\alpha_1+\alpha_2)/2R$. Recall that it is adapted to our interconnected \NOR\ gate setting just by replacing $C$ by $C_3$
given in \cref{eq:C_3_int}. We obtain
\begin{equation}
e^{-\frac{\delta_0}{2RC_3}}\Bigl(1+\frac{\delta_0}{a}\Bigr)^{\frac{a}{2RC_3}}=\frac{1}{2}.\label{eq:traj0first}
\end{equation}
Abbreviating $\alpha = \alpha_1+\alpha_2$ and noting $a=\alpha/2R$, raising \cref{eq:traj0first} 
to the power $2RC_3/a=C(R_5+2R)/a$ results in
\begin{equation}
e^{-\frac{\delta_0}{a}}\Bigl(1+\frac{\delta_0}{a}\Bigr)= 2^{-\frac{C(R_5+2R)}{a}}, \label{eq:traj0raised}
\end{equation}
which is equivalent to $ (1+ \frac{2R\delta_0}{\alpha})=2^{\frac{-2R C(R_5+2R)}{\alpha}} e^{\frac{2R\delta_0}{\alpha}}$. By raising  both sides to the power of $\alpha/(2R)$, we get $1 < (1+ \frac{2R\delta_0}{\alpha})^{\frac{\alpha}{2R}}= 2^{-C(R_5+2R)}e^{\delta_0}$. After raising it to the power $2R$ again, this can be rewritten as $(1+ \frac{\omega}{y})^y=\beta$ with $\omega = 2R\delta_0 > 0$, $y = \alpha > 0$, and $\beta = e^{2R(\delta_0-C(R_5+2R)\log(2))} > 1$. Substituting $z = 1+ \frac{\omega}{y}> 1$, we get $e^{\frac{\omega}{z-1}\log(z)}= \beta$, and taking the natural logarithm on both sides establishes
\begin{align}
\label{eq:Thm4term1}
&\log(z)=(z-1) \gamma,
\end{align}
for $\gamma = \frac{\log(\beta)}{\omega}> 0$. We need to solve \cref{eq:Thm4term1} for $z>1$ so as to obtain $\alpha=y=\frac{\omega}{z}$. 
Exponentiation of \cref{eq:Thm4term1} yields $z e^{-z \gamma}= e^{-\gamma}$, 
and multiplication by $-\gamma$ finally gives us $-z \gamma e^{- z \gamma}= - \gamma e^{-\gamma}$. We can solve this equation for $-z\gamma$ by means of the Lambert $W$
function. Since $\gamma > 0$ and we need the solution to satisfy $z > 1$, we
must take the branch $W_{-1}$ here to compute $z= - \frac{W_{-1}(-\gamma e^{-\gamma})}{\gamma}$.
Plugging in the definitions of $z$ and $\gamma$ into $y= \frac{\omega}{z}$, we obtain
\begin{align}
\label{eq:Thm4term2}
&y= - \frac{-\log(\beta)}{W_{-1}(-\frac{\log(\beta)}{\omega} \beta^{-\frac{1}{\omega}})+ \frac{\log(\beta)}{\omega}}.
\end{align}
Finally, replacing $\omega$ resp.\ $\beta$ by their ``generic'' definitions
$\omega = 2Rt$ resp.\ $\beta=e^{2R(t-C(R_5+2R)\log(2))}$ (where
$\delta_0$ is replaced by $t$) in \cref{eq:Thm4term2} gives \cref{eq:AtRC}.

It only remains to justify \cref{eq:alpha1} and \cref{eq:alpha2}, for which the same 
procedure as for \cref{eq:deltainf_int} and \cref{eq:deltaminf_int} can be used:
The same derivations as above, except that we
start from the variant of \cref{eq:traj0raised} where $a$ is replaced by
$\frac{\alpha_1}{2R}$ resp.\ $\frac{\alpha_2}{2R}$ for 
\cref{eq:alpha1} resp.\ \cref{eq:alpha2}. This finally also explains
why we can determine $R$ by (numerically) solving \cref{eq:forR}.
\end{proof}

\subsection{Experimental accuracy evaluation}
\label{Sec:Result}

In this section, we evaluate the accuracy of our interconnect-augmented model by comparing its predictions to
the actual delays of an interconnected \NOR\ gate obtained via analog simulations. 
More specifically, as illustrated in \cref{fig:experiment_setup}, we instantiated 
a \NOR\ gate connected to an inverter, acting as its load, via a controlled wire. 
The inputs of the \NOR\ gate are driven by a chain of 4 inverters acting as 
signal shaping gates. The chain input is stimulated by a saturated ramp 
with a rise/fall time of 0.1\,fs, which leads to ``natural'' signal waveforms
at the chain output. Different slew rates at the inputs of the \NOR\ gate
are generated by varying the driving strength of the last shaping inverter.

\begin{figure}[t!]
  \centering
  \includegraphics[width=0.7\linewidth]{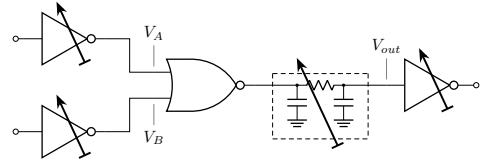}
  \caption{\small\em Experimental setup.}
  \label{fig:experiment_setup}
\end{figure}

For every setting, the following steps were performed:
\begin{enumerate}
\item[(1)] Utilizing a Verilog description of our CMOS \NOR\ gate implementation, we
employed the Cadence tools Genus and Innovus (version 19.11) for placing and routing
our design. 
\item[(2)] Using the extracted parasitic networks from the final layout, 
we performed SPICE simulations to determine
$\delta^{\uparrow / \downarrow}$ for different values of $\Delta$. 
\item[(3)] Using the measured MIS delay values $\ddoD_S(\infty)$,  $\ddoD_S(0)$, $\ddoD_S(\infty)$ 
and $\dupD_S(\infty)$, $\dupD_S(0)$, $\dupD_S(-\infty)$, as well as the measured minimal
pure delay $\dmin$ determined according to the procedure described in \cite{maier2021composable}
(with inputs $A$ and $B$ tied together) and some rough estimate\footnote{Our parametrization procedure
can adapt to any value for $C$, by scaling the resistors $R_{n_A}$, $R_{n_B}$ and $R$
appropriately.} of the load capacitance $C$,
we used \cref{thm:gatechar} for parametrizing our model.
\item[(4)] Using the equations given in \cref{Corol:1}, we computed the predictions 
of the parametrized model for different values $\Delta$,
and compared the outcome to the measured delays.
\end{enumerate}

The different settings used in the evaluation range from different 
implementation technologies to varying driving strengths and load capacitances 
to different wire lengths, wire resistances, and wire capacitances. 
Most of these results were obtained for a CMOS \NOR\ gate from the \SI{15}{\nm} 
Nangate Open Cell Library featuring FreePDK15$^\text{TM}$ FinFET models~\cite{Nangate15}
with a supply voltage of $\vdd=\SI{0.8}{\V}$. Qualitatively similar results have been obtained for
the UMC \SI{65}{\nm} technology with $\vdd=\SI{1.2}{V}$.

Overall, the accuracy of our interconnect-augmented model turned out to be surprisingly 
good, in any setting, despite our model's simplicity. Indeed, for none of the many
choices for wirelenghts etc.\ that have been explored in our experiments, i.e., not
just for the ones given below, we observed a worst-case inaccuracy above the 10\% range
in its delay predictions. Albeit this is not competitive compared to the modeling approaches 
used in STA, which achieve accuracies in the \%-range, see~\cref{sec:MIS}, it is a remarkable
improvement over the state-of-the-art in digital dynamic timing analysis: As explained in
\cref{sec:DDTA}, existing tools rely on pure or inertial delay models here, which 
do cover MIS effects at all. Note that this also explains why an explicit comparison 
to these approaches would be void.

On the other hand, we need to stress that these reassuring results are in stark contrast 
to the ones obtained for the original model \cite{ferdowsi2023accurate} (for ``naked'') 
\NOR\ gates, where even the parametrization procedure in step (3) already failed 
in most scenarios considered in this paper! This also confirms that adding $R_5$ 
is really instrumental for modeling interconnected gates.

A representative sample of our results will be presented in the following subsections.
In all our figures, the SPICE-generated delays are depiced by the dashed red curve,
whereas the delays predicted by our model are represented by the blue curve.

\begin{figure}[t]
  \centerline{
    \includegraphics[width=0.7\linewidth]{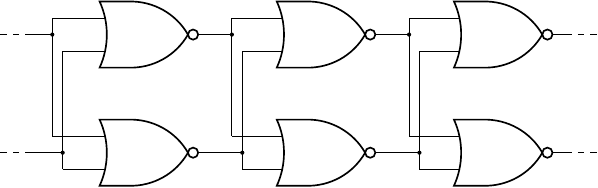}
  }
  \caption{Illustration of three stages of the cross-coupled \NOR\ gates chain used for
    simulation time evaluation.}
  \label{fig:RunTime}
\end{figure}

\medskip

In order to also experimentally confirm our (intuitively obvious) claim that 
the running time of an implementation of our model in the context of the discrete 
event simulation-based Involution Tool \invt\ \cite{OMFS20:INTEGRATION} for dynamic
digital timing analysis would outperform the numerical integration-based SPICE by 
orders of magnitude, we used the Python interface added to the \invt\ in 
\cite{OS23:DDECS} to add support for \NOR\ gates using the delay formulas 
of \cref{Corol:1}. As our target circuit, we use two parallel chains of 
$n$ identical cross-coupled
\NOR\ gates as shown in \cref{fig:RunTime}, which are
stimulated by two input signals $I_1$ and $I_2$ that are randomly
generated pulse trains according to a normal distribution with mean $\mu$ and
standard deviation $\sigma$. 

For $n=50$, we determined the simulation times (averaged
over 20 runs each) for two different settings: (a) for $N=1000$ transitions with increasing
average $\mu \in \{50, 100, 200\}$ps
and $\sigma \in \{30, 60, 120\}$ps, which amounts to two times
doubling the average length of the simulated signal traces and
(b) for $N \in \{1000, 2000, 4000\}$ transitions with decreasing
$\mu \in \{200, 100, 50\}$ps and $\sigma \in \{120, 60, 30\}$ps,
which amounts to two times doubling the number of transitions within
a constant average trace length.
\cref{table:runtime} shows the running times for both cases: The
top entry in every element of the matrix is the running time of
our Involution Tool, the bottom entry is the running time of SPICE.

\begin{table}[h]
\centering
\caption{\small\em Running time comparison (in seconds) of our model implementation 
in the \invt \ (top) vs. SPICE (bottom), for different
numbers of transitions $N$ and increasing average trace lengths (encoded via $\mu$).}
\label{table:runtime}
\scalebox{0.8}{
\begin{tabular}{|c|c|c|c|}
\hline
    & $N=1000$ & $N=2000$ & $N=4000$\\\hline
$\mu=50,\sigma=30$ & \begin{tabular}[c]{@{}c@{}}$18.9$~s\\ $941.1$~s \end{tabular} & \begin{tabular}[c]{@{}c@{}}$$\\ $$\end{tabular} & \begin{tabular}[c]{@{}c@{}}$61.5$~s\\ $3741.6$~s\end{tabular} \\ \hline
$\mu=100,\sigma=60$ & \begin{tabular}[c]{@{}c@{}}$20.3$~s\\ $1428.7$~s \end{tabular} & \begin{tabular}[c]{@{}c@{}}$35.5$~s\\ $2803.8$~s\end{tabular} & \begin{tabular}[c]{@{}c@{}}$$\\ $$\end{tabular} \\ \hline
$\mu=200,\sigma=120$ & \begin{tabular}[c]{@{}c@{}}$20.5$~s\\ $1836.4$~s \end{tabular} & \begin{tabular}[c]{@{}c@{}}$$\\ $$\end{tabular} & \begin{tabular}[c]{@{}c@{}}$$\\ $$\end{tabular} \\ \hline
\end{tabular}}
\end{table}

As expected, nonwithstanding the fact that the Involution Tool is just a
research prototype and has hence never been optimized for performance at all
(but rather slowed down substantially by incorporating Python code), it
outperforms SPICE by almost two orders of magnitude. Regarding scalability,
it is apparent from the second row (for $\mu=100$ps) that doubling the number
of transitions $N$ causes both running times to almost double (75\% resp.\ 
96\% increase for \invt\ resp.\ SPICE). The same behavior is exhibited by
\invt\ if the average trace length is kept constant when $N$ is doubled 
(the secondary diagonal in \cref{table:runtime}), albeit SPICE shows an
increase of only about 50\% resp.\ 33\% in the first and second doubling.
If only the trace length is doubled but the number of transitions $N$ 
is fixed (the column for $N=1000$), the running time of the \invt\ does 
not go up. The running time of SPICE increases, though, albeit only 
by around 50\% resp.\ 30\% for the first resp.\ the second doubling. 
We conjecture that the observed running time improvement for SPICE is caused by
its time-step adaptive numeric integration method, which speeds up in
the case of less-varying signals. 

To also demonstrate what happens when doubling the size of the circuit,
we also ran our comparison for $N=1000$, $\mu=50$ps and $\sigma=30$ps 
for a chain consisting of $n=100$ stages.
We observed a simulation time of $34.6$~s for the Involution Tool
and $2246.5$~s for SPICE. The about 83\% increase for the \invt\ compared
to the top-left entry in \cref{table:runtime} can
be traced back to the doubling of the number of transitions
occurring in the circuit, the about 183\% increase for SPICE
is primarily caused by the doubling of the number of transistors.

\medskip

\subsubsection{Wire length}
Utilizing the 15nm technology, we varied the length of the wire driven by the \NOR\ gate across the range of $l=3$ to $l=15$ micrometers\footnote{The lenght parameter $l$ actually corresponds to the parameter \$LENGTH in the command \emph{relativePlace inv1 nor1 -relation R -xOffset \$LENGTH -yOffset 0}, and thus approximately represents the length in $\mu$m, disregarding vias.}. Note that our choice of $l$ for the data shown is not really important, as we observed
very similar accuracies also for longer and shorter wires. The model parameters are given in \cref{table:Param2}, and the results are shown in \cref{Fig:Wl}. The modeling accuracy is indeed remarkable.

\begin{table}[h]
\centering
\caption{\small\em Model parameter values for two wire lengths $l=3 \  \mu m$ and $l=15 \  \mu m$, for which $\delta_{min}=4.32\ ps$ and $\delta_{min}=5.08\ ps$, respectively. The chosen load capacitance is $C=1.2831 fF$.}
\label{table:Param2}
\scalebox{0.75}{
\begin{tabular}{|clcc|}
\hline
\multicolumn{4}{|c|}{Parameters for $l=3 \  \mu m$}                                           \\ \hline
\multicolumn{2}{|c|}{$R_{n_{A}}=2.1936 \ k \si{\ohm}$} & \multicolumn{1}{c|}{$R_{n_{B}}=2.011 \ k \si{\ohm}$} & $R_5=399.41 \ \si{\ohm}$      \\ \hline
\multicolumn{2}{|c|}{$R=1.2771 \ k \si{\ohm}$}         & \multicolumn{1}{c|}{$\alpha_1=1.078 e-9 \ \si{\ohm} s$}  & $\alpha_2=0.5102 e-9 \ \si{\ohm} s$ \\ \hline
\multicolumn{4}{|c|}{Parameters for $l=15 \  \mu m$}                                          \\ \hline
\multicolumn{2}{|c|}{$R_{n_{A}}=2.9 \ k \si{\ohm}$} & \multicolumn{1}{c|}{$R_{n_{B}}=2.7493 \ k \si{\ohm}$} & $R_5=360.49 \ \si{\ohm}$      \\ \hline
\multicolumn{2}{|c|}{$R=2.0545 \ k \si{\ohm}$}         & \multicolumn{1}{c|}{$\alpha_1=1.479e-9 \ \si{\ohm} s$}  & $\alpha_2=0.8441e-9 \ \si{\ohm} s$ \\ \hline
\end{tabular}}
\end{table}

\begin{figure*}[t!]
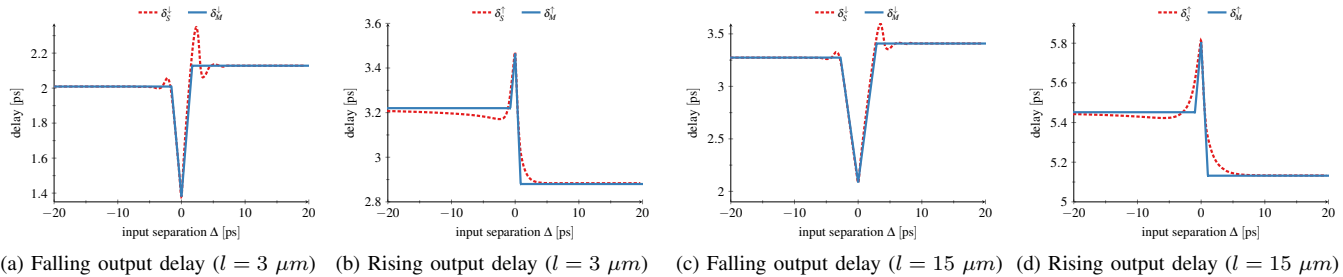

  \centering
  \subfloat[Falling output delay ($l=3 \  \mu m$)]{
    \includegraphics[width=0.23\linewidth]{\figPath{INT_RITL3.pdf}}%
    \label{Fig:WlR3}}
  \hfil
  \subfloat[Rising output delay ($l=3 \  \mu m$)]{
    \includegraphics[width=0.23\linewidth]{\figPath{INT_FITL3.pdf}}%
    \label{Fig:WlF3}}
\ifthenelse{\boolean{conference}}
{}
{ \hfil
  \subfloat[Falling output delay ($l=5 \  \mu m$)]{
 \includegraphics[width=0.23\linewidth]{\figPath{INT_RIT5.pdf}}
    \label{Fig:WlR5}}
  \hfil
  \subfloat[Rising output delay ($l=5 \  \mu m$)]{
\includegraphics[width=0.23\linewidth]{\figPath{INT_FIT5.pdf}}%
    \label{Fig:WlF5}}
  \hfil
  \subfloat[Falling output delay ($l=10 \  \mu m$)]{
    \includegraphics[width=0.23\linewidth]{\figPath{INT_RIT10.pdf}}%
    \label{Fig:WlR10}}
  \hfil
  \subfloat[Rising output delay ($l=10 \  \mu m$)]{
    \includegraphics[width=0.23\linewidth]{\figPath{INT_FIT10.pdf}}%
    \label{Fig:WlF10}}
}
 \hfil
  \subfloat[Falling output delay ($l=15 \  \mu m$)]{
 \includegraphics[width=0.23\linewidth]{\figPath{INT_RITL15.pdf}}
    \label{Fig:WlR15}}
  \hfil
  \subfloat[Rising output delay ($l=15 \  \mu m$)]{
\includegraphics[width=0.23\linewidth]{\figPath{INT_FITL15.pdf}}%
    \label{Fig:WlF10}}
  \caption{\small\em SPICE-generated ($\delta_S^{\uparrow / \downarrow}(\Delta)$) and predicted ($\delta_M^{\uparrow / \downarrow}(\Delta)$) MIS delays for a 15nm technology \NOR\ gate for different wire lengths $l$.}\label{Fig:Wl}  
\end{figure*}

\subsubsection{Wire resistance and capacitance}
In order to verify the ability of our model to adapt to varying parasitic networks, we artificially changed the resistances and capacitances of the extracted network for wire length $l=15 \  \mu m$: in one setting, we halved all the resistor values, and in another setting, we doubled the values of all capacitors.
\cref{table:Param3} gives the model parameters and \cref{result_CAP_RES_15} shows the results, again revealing
a very good match.

\begin{table}[h]
\centering
\caption{\small\em Model parameter values for different wire resistances and capacitances, with $\delta_{min}=0.51\ ps$ and $\delta_{min}=0.46\ ps$ for double capacitance and half resistance. The chosen load capacitance value is $C=1.2831 fF$.}
\label{table:Param3}
\scalebox{0.75}{
\begin{tabular}{|clcc|}
\hline
\multicolumn{4}{|c|}{Parameters for doubling the capacitance}                                           \\ \hline
\multicolumn{2}{|c|}{$R_{n_{A}}=4.5510 \ k \si{\ohm}$} & \multicolumn{1}{c|}{$R_{n_{B}}=4.3896 \ k \si{\ohm}$} & $R_5=447.11 \ \si{\ohm}$      \\ \hline
\multicolumn{2}{|c|}{$R=3.5436 \ k \si{\ohm}$}         & \multicolumn{1}{c|}{$\alpha_1=2.215 e-9 \ \si{\ohm} s$}  & $\alpha_2=1.393 e-9 \ \si{\ohm} s$ \\ \hline
\multicolumn{4}{|c|}{Parameters for half the resistor values}                                          \\ \hline
\multicolumn{2}{|c|}{$R_{n_{A}}=2.9037 \ k \si{\ohm}$} & \multicolumn{1}{c|}{$R_{n_{B}}=2.7578 \ k \si{\ohm}$} & $R_5=366.26 \ \si{\ohm}$      \\ \hline
\multicolumn{2}{|c|}{$R=2.0503 \ k \si{\ohm}$}         & \multicolumn{1}{c|}{$\alpha_1=1.486e-9 \ \si{\ohm} s$}  & $\alpha_2=0.88e-9 \ \si{\ohm} s$ \\ \hline
\end{tabular}}
\end{table}

\begin{figure*}[t!]
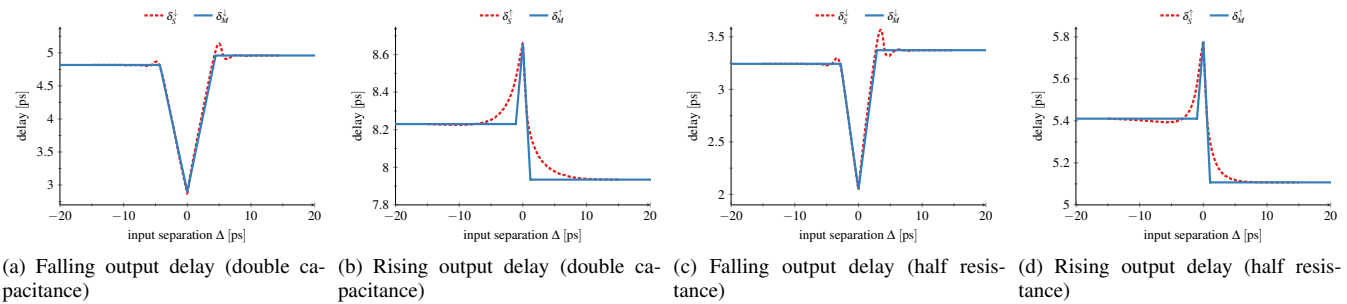

  \centering
  \subfloat[Falling output delay (double capacitance)]{
    \includegraphics[width=0.23\linewidth]{\figPath{INT_RITL15DC.pdf}}%
    \label{fig:resultDoublecapa}}
  \hfil
  \subfloat[Rising output delay (double capacitance)]{
    \includegraphics[width=0.23\linewidth]{\figPath{INT_FITL15DC.pdf}}%
    \label{fig:resultDoublecap2}}
 \hfil
  \subfloat[Falling output delay (half resistance)]{
 \includegraphics[width=0.23\linewidth]{\figPath{INT_RITL15HR.pdf}}
    \label{fig:resultHalfRa}}
  \hfil
  \subfloat[Rising output delay (half resistance)]{
\includegraphics[width=0.23\linewidth]{\figPath{INT_FITL15HR.pdf}}%
    \label{fig:resultHalfRb}}
  \caption{\small\em SPICE-generated ($\delta_S^{\uparrow / \downarrow}(\Delta)$) and predicted ($\delta_M^{\uparrow / \downarrow}(\Delta)$) MIS delays for a 15nm technology \NOR\ gate for wire length $l=15 \  \mu m$ when the wire capacitances are doubled (two left figures) resp.\ the wire resistors are halved (two right figures).}\label{result_CAP_RES_15}  
\end{figure*}

\subsubsection{Load capacitance}
Additionally, we explored varying the load capacitance of the 15nm \NOR\ gate with wire lengths of $l=3 \  \mu m$ and $l=15 \  \mu m$, achieved by increasing the fan-out of the \NOR\ gate acting as a load. To accomplish this, we used inverters comprising $2$, $4$, and $8$ parallel \pmos\ and \nmos\ transistors. The outcomes of these experiments, which used the parametrization given in \cref{table:Param4}, are depicted in \cref{result_LCAP_L3_15}. Note that we had to incorporate different (increasing) values for $C$ in each case, to match the quite different (increasing) measured delays. 

\begin{table}[h]
\centering
\caption{\small\em Model parameter values for different load capacitances.}
\label{table:Param4}
\scalebox{0.75}{
\begin{tabular}{|clccc|}
\hline
\multicolumn{5}{|c|}{Parameters for \cref{fig:Cap_resulta} and \cref{fig:Cap_resultb}.}            \\ \hline
\multicolumn{2}{|c|}{$C= 1.2831 \ fF$}   & \multicolumn{1}{c|}{$\delta_{min}= 0.41 \ ps$}       & \multicolumn{1}{c|}{$R_{n_{A}}= 2.1496  \ k \si{\ohm}$}       & $R_{n_{B}}= 2.0068  \ k \si{\ohm}$       \\ \hline
\multicolumn{2}{|c|}{$R_5= 355.82   \si{\ohm}$} & \multicolumn{1}{c|}{$R= 1.3993  \ k \si{\ohm}$}                  & \multicolumn{1}{c|}{$\alpha_1=1.075 e-9  \ \si{\ohm} s$}        & $\alpha_2= 0.564 e-9 \ \si{\ohm} s$        \\ \hline
\multicolumn{5}{|c|}{Parameters for \cref{fig:Cap_resultc} and \cref{fig:Cap_resultd}.}            \\ \hline
\multicolumn{2}{|c|}{$C= 2.9775 \ fF$}   & \multicolumn{1}{c|}{$\delta_{min}= 0.29\ ps$}       & \multicolumn{1}{c|}{$R_{n_{A}}= 2.6760  \ k \si{\ohm}$}       & $R_{n_{B}}= 2.5921 \ k \si{\ohm}$       \\ \hline
\multicolumn{2}{|c|}{$R_5= 232.77  \ \si{\ohm}$} & \multicolumn{1}{c|}{$R= 2.1640  \ k \si{\ohm}$}                  & \multicolumn{1}{c|}{$\alpha_1= 1.273 e-9 \ \si{\ohm} s$}        & $\alpha_2= 0.785 e-9 \ \si{\ohm} s$        \\ \hline
\multicolumn{5}{|c|}{Parameters for \cref{fig:Cap_resulte} and \cref{fig:Cap_resultf}.}            \\ \hline
\multicolumn{2}{|c|}{$C= 1.2831 \ fF$}   & \multicolumn{1}{c|}{$\delta_{min}= 0.46  \ ps$}       & \multicolumn{1}{c|}{$R_{n_{A}}= 3.4405  \ k \si{\ohm}$}       & $R_{n_{B}}= 3.2801 \ k \si{\ohm}$       \\ \hline
\multicolumn{2}{|c|}{$R_5= 434.88  \ \si{\ohm}$} & \multicolumn{1}{c|}{$R= 2.5447 \ k \si{\ohm}$}                  & \multicolumn{1}{c|}{$\alpha_1= 1.697 e-9 \ \si{\ohm} s$}        & $\alpha_2= 0.984 e-9 \ \si{\ohm} s$        \\ \hline
\multicolumn{5}{|c|}{Parameters for \cref{fig:Cap_resultg} and \cref{fig:Cap_resulth}.}            \\ \hline
\multicolumn{2}{|c|}{$C=3.2831 \ fF$}   & \multicolumn{1}{c|}{$\delta_{min}= 0.41 \ ps$}       & \multicolumn{1}{c|}{$R_{n_{A}}= 2.6738  \ k \si{\ohm}$}       & $R_{n_{B}}=  2.5997 \ k \si{\ohm}$       \\ \hline
\multicolumn{2}{|c|}{$R_5= 197.55  \ \si{\ohm}$} & \multicolumn{1}{c|}{$R= 2.2261 \ k \si{\ohm}$}                  & \multicolumn{1}{c|}{$\alpha_1= 1.138  e-9 \ \si{\ohm} s$}        & $\alpha_2= 0.693 e-9 \ \si{\ohm} s$        \\ \hline
\end{tabular}}
\end{table}

\begin{figure*}[t!]
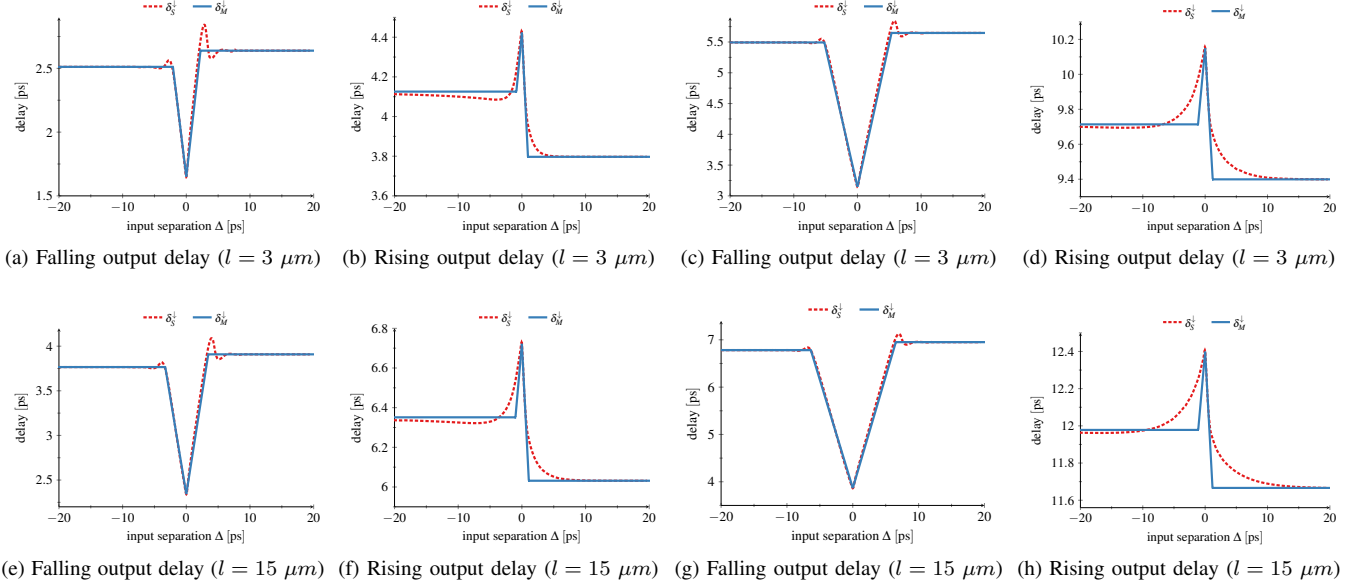

  \centering
  \subfloat[Falling output delay ($l=3 \  \mu m$)]{
    \includegraphics[width=0.23\linewidth]{\figPath{INT_RITL3_LC2.pdf}}%
    \label{fig:Cap_resulta}}
  \hfil
  \subfloat[Rising output delay ($l=3 \  \mu m$)]{
    \includegraphics[width=0.23\linewidth]{\figPath{INT_FITL3_LC2.pdf}}%
    \label{fig:Cap_resultb}}
 \hfil
  \subfloat[Falling output delay ($l=3 \  \mu m$)]{
 \includegraphics[width=0.23\linewidth]{\figPath{INT_RITL3_LC8.pdf}}
    \label{fig:Cap_resultc}}
  \hfil
  \subfloat[Rising output delay ($l=3 \  \mu m$)]{
\includegraphics[width=0.23\linewidth]{\figPath{INT_FITL3_LC8.pdf}}%
    \label{fig:Cap_resultd}}
    \hfil
     \subfloat[Falling output delay ($l=15 \  \mu m$)]{
    \includegraphics[width=0.23\linewidth]{\figPath{INT_RITL15_LC2.pdf}}%
    \label{fig:Cap_resulte}}
  \hfil
  \subfloat[Rising output delay ($l=15 \  \mu m$)]{
    \includegraphics[width=0.23\linewidth]{\figPath{INT_FITL15_LC2.pdf}}%
    \label{fig:Cap_resultf}}
 \hfil
  \subfloat[Falling output delay ($l=15 \  \mu m$)]{
 \includegraphics[width=0.23\linewidth]{\figPath{INT_RITL15_LC8.pdf}}
    \label{fig:Cap_resultg}}
  \hfil
  \subfloat[Rising output delay ($l=15 \  \mu m$)]{
\includegraphics[width=0.23\linewidth]{\figPath{INT_FITL15_LC8.pdf}}%
    \label{fig:Cap_resulth}}
  \caption{\small\em SPICE-generated ($\delta_S^{\uparrow / \downarrow}(\Delta)$) and predicted ($\delta_M^{\uparrow / \downarrow}(\Delta)$) MIS delays for a \SI{15}{\nm} technology \NOR\ gate driving two parallel pMOS and nMOS transistors (two left figures) and eight parallel pMOS and nMOS transistors (two right figures) for different wire lengths $l$.}\label{result_LCAP_L3_15}  
\end{figure*}

\subsubsection{Input gate driving strength}
In two other settings, we varied the driving strength of the two input inverters in \cref{fig:experiment_setup} that drive $V_A$ and $V_B$ of the \SI{15}{\nm} \NOR\ gate connected to the load inverter via a wire of length $l=15 \  \mu m$. More specifically, we used both a strong input inverter (comprising four parallel pMOS and nMOS transistors) and a weak input inverter (which was simulated by letting the input inverters drive three additional \NOR\ gates, resulting in a fan-out of four). \cref{result_SWI_L15} shows the outcomes for both scenarios, which have been obtained using the model parameters listed in \cref{table:Param5}.

\begin{table}[h]
\centering
\caption{\small\em Model parameter values associated with different input gate driving strength.}
\label{table:Param5}
\scalebox{0.75}{
\begin{tabular}{|clccc|}
\hline
\multicolumn{5}{|c|}{Parameters for \cref{fig:WI_resulta} and \cref{fig:WI_resultb}.}            \\ \hline
\multicolumn{2}{|c|}{$C= 1.2831 \ fF$}   & \multicolumn{1}{c|}{$\delta_{min}= 0.78 \ ps$}       & \multicolumn{1}{c|}{$R_{n_{A}}= 3.3696  \ k \si{\ohm}$}       & $R_{n_{B}}= 3.3344  \ k \si{\ohm}$       \\ \hline
\multicolumn{2}{|c|}{$R_5= 895.42   \si{\ohm}$} & \multicolumn{1}{c|}{$R= 2.1595  \ k \si{\ohm}$}                  & \multicolumn{1}{c|}{$\alpha_1=1.634 e-9  \ \si{\ohm} s$}        & $\alpha_2= 0.942 e-9 \ \si{\ohm} s$        \\ \hline
\multicolumn{5}{|c|}{Parameters for \cref{fig:SI_resulta} and \cref{fig:SI_resultb}.}            \\ \hline
\multicolumn{2}{|c|}{$C= 1.2831 \ fF$}   & \multicolumn{1}{c|}{$\delta_{min}= 0.41 \ ps$}       & \multicolumn{1}{c|}{$R_{n_{A}}= 2.8899  \ k \si{\ohm}$}       & $R_{n_{B}}= 2.7053 \ k \si{\ohm}$       \\ \hline
\multicolumn{2}{|c|}{$R_5= 231.29  \ \si{\ohm}$} & \multicolumn{1}{c|}{$R= 2.0305  \ k \si{\ohm}$}                  & \multicolumn{1}{c|}{$\alpha_1= 1.526 e-9 \ \si{\ohm} s$}        & $\alpha_2=1.083 \ \si{\ohm} s$        \\ \hline
\end{tabular}}
\end{table}

\begin{figure*}[t!]
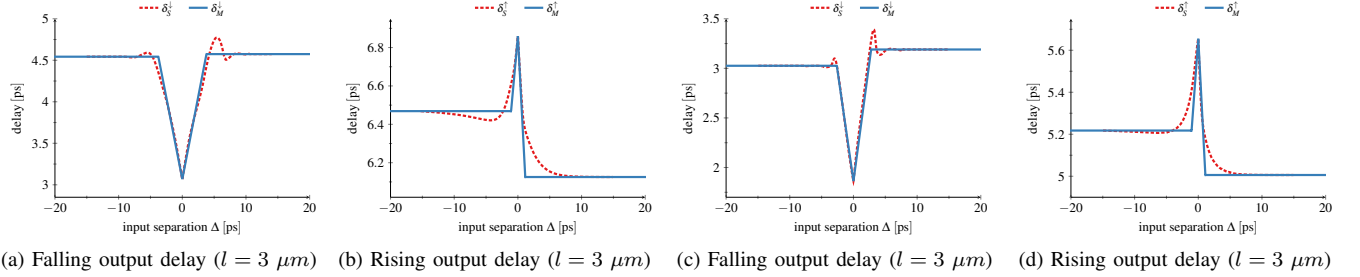

  \centering
  \subfloat[Falling output delay ($l=3 \  \mu m$)]{
    \includegraphics[width=0.23\linewidth]{\figPath{INT_RITL15WI.pdf}}%
    \label{fig:WI_resulta}}
  \hfil
  \subfloat[Rising output delay ($l=3 \  \mu m$)]{
    \includegraphics[width=0.23\linewidth]{\figPath{INT_FITL15WI.pdf}}%
    \label{fig:WI_resultb}}
 \hfil
  \subfloat[Falling output delay ($l=3 \  \mu m$)]{
 \includegraphics[width=0.23\linewidth]{\figPath{INT_RITL15SI.pdf}}
    \label{fig:SI_resulta}}
  \hfil
  \subfloat[Rising output delay ($l=3 \  \mu m$)]{
\includegraphics[width=0.23\linewidth]{\figPath{INT_FITL15SI.pdf}}%
    \label{fig:SI_resultb}}
  \caption{\small\em SPICE-generated ($\delta_S^{\uparrow / \downarrow}(\Delta)$) and predicted ($\delta_M^{\uparrow / \downarrow}(\Delta)$) MIS delays for a \SI{15}{\nm} technology \NOR\ gate for wire length $l=15 \  \mu m$ with weak input drivers (two left figures) resp. strong input drivers (two right figures).}\label{result_SWI_L15}  
\end{figure*}

\subsubsection{Other technologies}

To validate that our model achieves comparable modeling accuracies also in different technologies,
we conducted additional simulations for a \NOR\ gate in UMC \SI{65}{\nm} technology with a supply voltage of $\vdd=\SI{1.2}{V}$. Given the qualitative similarity of the results, we present only a very small subset in this paper: \Cref{result_65T} shows the results for two different wire lengths $l \in \{5,25 \}$, based on the parameters given in \cref{table:Param6}.

\begin{table}[h]
\centering
\caption{\small\em Model parameter values for different wire lengths in \SI{65}{\nm} technology.}
\label{table:Param6}
\scalebox{0.75}{
\begin{tabular}{|clccc|}
\hline
\multicolumn{5}{|c|}{Parameters for \cref{fig:WI_resulta} and \cref{fig:WI_resultb}.}            \\ \hline
\multicolumn{2}{|c|}{$C= 6.2831 \ fF$}   & \multicolumn{1}{c|}{$\delta_{min}= 1.76 \ ps$}       & \multicolumn{1}{c|}{$R_{n_{A}}= 6.2629  \ k \si{\ohm}$}       & $R_{n_{B}}= 5.8159  \ k \si{\ohm}$       \\ \hline
\multicolumn{2}{|c|}{$R_5= 4.089  \ k \si{\ohm}$} & \multicolumn{1}{c|}{$R= 600.66  \  \si{\ohm}$}                  & \multicolumn{1}{c|}{$\alpha_1=3.483 e-9  \ \si{\ohm} s$}        & $\alpha_2= 0.908 e-9 \ \si{\ohm} s$        \\ \hline
\multicolumn{5}{|c|}{Parameters for \cref{fig:SI_resulta} and \cref{fig:SI_resultb}.}            \\ \hline
\multicolumn{2}{|c|}{$C= 7.2831 \ fF$}   & \multicolumn{1}{c|}{$\delta_{min}= 1.109 \ ps$}       & \multicolumn{1}{c|}{$R_{n_{A}}= 8.1075  \ k \si{\ohm}$}       & $R_{n_{B}}= 7.7869 \ k \si{\ohm}$       \\ \hline
\multicolumn{2}{|c|}{$R_5= 4.3430  \ k  \si{\ohm}$} & \multicolumn{1}{c|}{$R= 2.065  \  \si{\ohm}$}                  & \multicolumn{1}{c|}{$\alpha_1= 9.687 e-9 \ \si{\ohm} s$}        & $\alpha_2=3.073 \ \si{\ohm} s$        \\ \hline
\end{tabular}}
\end{table}

\begin{figure*}[t!]
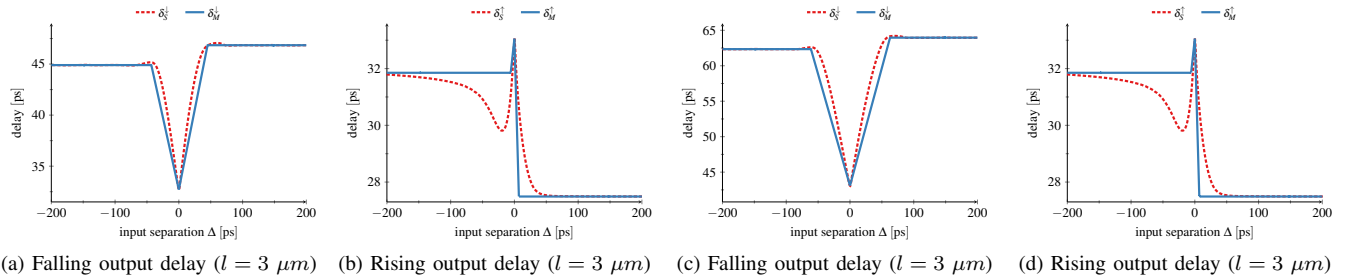

  \centering
  \subfloat[Falling output delay ($l=3 \  \mu m$)]{
    \includegraphics[width=0.23\linewidth]{\figPath{INT_RITL5_65.pdf}}%
    \label{fig:65Tl5_resulta}}
  \hfil
  \subfloat[Rising output delay ($l=3 \  \mu m$)]{
    \includegraphics[width=0.23\linewidth]{\figPath{INT_FITL5_65.pdf}}%
    \label{fig:65Tl5_resultb}}
 \hfil
  \subfloat[Falling output delay ($l=3 \  \mu m$)]{
 \includegraphics[width=0.23\linewidth]{\figPath{INT_RITL25_65.pdf}}
    \label{fig:65Tl25_resulta}}
  \hfil
  \subfloat[Rising output delay ($l=3 \  \mu m$)]{
\includegraphics[width=0.23\linewidth]{\figPath{INT_FITL5_65.pdf}}%
    \label{fig:65Tl25_resultb}}
  \caption{\small\em SPICE-generated ($\delta_S^{\uparrow / \downarrow}(\Delta)$) and predicted ($\delta_M^{\uparrow / \downarrow}(\Delta)$) MIS delays for a \SI{65}{\nm} technology \NOR\ gate with different wire length $l \in \{5 \  \mu m, 25 \  \mu m \}$.}\label{result_65T}  
\end{figure*}

\section{An Accurate Hybrid MIS Delay Model for Interconnected Muller \cg\ Gates}
\label{Sec:IntModelC}

A crucial feature of a delay modeling approach like ours is applicability to different types of
gates. As already explained in \cite{ferdowsi2023accurate}, it is easy to derive a model for 
(interconnected) \NAND\ gates from our model for \NOR\ gates: Since the former is obtained by 
simply swapping \nmos\ transistors with \pmos\ transistors and $\vdd$ with $\gnd$ and vice versa,
the appropriate formulas can be easily translated as well. More generally, our approach can principally
be applied to every gate that involves serial or parallel transistors, including \NOR, \NAND\ with more
than two inputs, as well as Muller \cg\ and \AOIgate\ (and-or-inverter). Developing the delay formulas 
for such a gate may need substantial (mathematical) effort, in particular, for more than two inputs
(this effort is inevitable for any MIS-aware delay model, cp.~e.g.\ \cite{CS96:DAC,CGB01:DAC,SRC15:TODAES,AKMK06:DAC}.

In this section, we will demonstrate this by developing a thresholded hybrid model for 
interconnected Muller \cg\ gates, which are 
core elements in asynchronous circuit designs \cite{CharlieEffect}.

\begin{figure}[th]
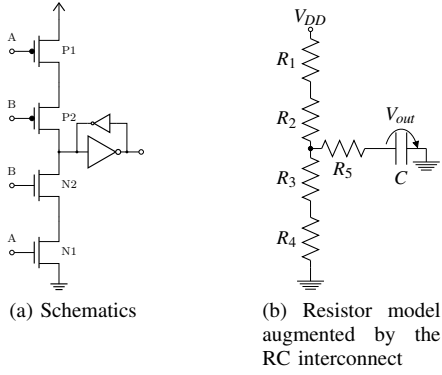

  \centering
  \subfloat[Schematics]{
\includegraphics[height=0.44\linewidth]{\figPath{Martin_C_element.pdf}}
    \label{fig:C_trans_impl}}
  \hfil
  \subfloat[Resistor model augmented by the RC interconnect]{
 \includegraphics[height=0.43\linewidth]{\figPath{C_Model_interC.pdf}}%
    \label{fig:C_Model_impl}}
  \caption{CMOS \cg\ gate implementation and the resistor model equipped with the RC interconnect component.}
\end{figure}

The thresholded hybrid model for ``naked'' Muller \cg\ gates already developed in \cite{ferdowsi2023accurate}
is based on the CMOS implementation depicted in \cref{fig:C_trans_impl}.
Despite the apparent complications introduced by the state keeper element formed by the loop of two inverters, it turned out that one can safely disregard it in the model: the load capacitance $C$ in the corresponding resistor model (\cref{fig:C_Model_impl}) effectively implements an ideal state keeper for $\vout$ when the output is in a high-impedance state, i.e., when at least one of $P1$ and $P2$ and at least one of $N1$ and $N2$ are switched off. To accommodate the negation of the output, $R_1$ and $R_2$ were designated to correspond to the \nmos\ transistors $N2$ and $N1$, and $R_3$ and $R_4$ to the \pmos\ transistors $P1$ and $P2$.
As in the case of the \NOR\ gate, continuously varying resistors according to \cref{on_mode} were assumed for switching-on, and instantaneously changing resistors for switching-off, but this time for all transistors.

Applying Kirchhoff's rules to \cref{fig:C_Model_impl} results again in the first-order non-homogeneous ODE \cref{EqIC0}, albeit with
{\small
\begin{equation}
  \frac{1}{R_g(t)}=\frac{1}{R_1(t)+R_2(t)} +\frac{1}{R_3(t)+R_4(t)}.   \label{eq:RgC}
\end{equation}
}

The approach used for solving \cref{EqIC0} and developing the appropriate delay formulas for the interconnected \NOR\ gate
can hence be adopted also here, with the 
only difference that $G(t)=\bigl( I_1(t)+I_2(t) \bigr)/C$, where $I_1(t)= \int_{0}^{t} \frac{ds}{R_1(s)+R_2(s)}$ and $I_2(t)= \int_{0}^{t} \frac{ds}{R_3(s)+R_4(s)}$. \cref{T:Cgate} summarizes all possible input state transitions, the corresponding resistor mode switch timing, the relevant integrals $I_1$ and $I_2$, and the exact or approximated value of $f(t)$.

More specifically, a detailed comparison between the transitions in \cref{T:Cgate} and those in \cref{tab:T1} and \cref{T:InerInt} shows that extracting the output voltage trajectories and deriving delay expressions for \emph{both} rising and falling input transitions for the Muller \cg\ gate is identical to the falling input transition case of the \NOR\ gate. The only differences are the substitution of $R_p$ with $R_n$ for rising input transitions, and the replacement of $\alpha_1$ resp.\ $\alpha_2$ with $\alpha_4$ resp.\ $\alpha_3$ for falling input transitions. Consequently, the delay formulas for the interconnected \cg\ gate stated in \cref{thm:MISdelayCgate} are very similar to those obtained for the falling input transition case of the interconnected \NOR\ gate.

\begin{table*}[h]
\centering
\caption{\small State transitions, integrals $I_1(t)$ and $I_2(t)$, and the function $U(t)$ for the \cg\ gate. $2R_p=R_{p_A}+R_{p_B}$ and $2R_n=R_{n_A}+R_{n_B}$.}
\scalebox{0.65}
{
\begin{tabular}{lllllllllllllllll}
\hline
Transition                &  & $t_A$       & $t_B$       &  & $R_1$                & $R_2$                & $R_3$                & $R_4$                &  & $I_1(t)= \int_{0}^{t} \frac{\dd s}{R_1(s)+R_2(s)}$                         &  & $I_2t)= \int_{0}^{t} \frac{\dd s}{R_3(s)+R_4(s)}$                          &  & $U(t)= \frac{\vdd}{C(R_1(t)+R_2(t))}$                                      &  & $f(t)=\frac{1}{\frac{R_5}{R_g(t)}+1}$ \\ \cline{1-1} \cline{3-4} \cline{6-9} \cline{11-11} \cline{13-13} \cline{15-15} \cline{17-17} 
$(0,0) \rightarrow (1,0)$ &  & $0$         & $-\infty$   &  & $on \rightarrow off$ & $on$                 & $off \rightarrow on$ & $off$                &  & $0$                                                                        &  & $0$                                                                        &  & $0$                                                                        &  & $=1$                                \\
$(1,0) \rightarrow (1,1)$ &  & $-|\Delta|$ & $0$         &  & $off$                & $on \rightarrow off$ & $on$                 & $off \rightarrow on$ &  & $\int_{0}^{t}(1/(\frac{\alpha_1}{s+\Delta}+\frac{\alpha_2}{s}+2R_n))\dd s$ &  & $0$                                                                        &  & $V_{DD}/\bigl(C( \frac{\alpha_1}{t+\Delta}+\frac{\alpha_2}{t}+2R_n)\bigr)$ &  & $\approx \frac{2R_n}{R_5+2R_n}$     \\
$(0,0) \rightarrow (0,1)$ &  & $-\infty$   & $0$         &  & $on$                 & $on \rightarrow off$ & $off$                & $off \rightarrow on$ &  & $0$                                                                        &  & $0$                                                                        &  & $0$                                                                        &  & $=1$                                \\
$(0,1) \rightarrow (1,1)$ &  & $0$         & $-|\Delta|$ &  & $on \rightarrow off$ & $off$                & $off \rightarrow on$ & $on$                 &  & $\int_{0}^{t}(1/(\frac{\alpha_1}{s}+\frac{\alpha_2}{s+\Delta}+2R_n))\dd s$ &  & $0$                                                                        &  & $V_{DD}/\bigl(C( \frac{\alpha_1}{t}+\frac{\alpha_2}{t+\Delta}+2R_n)\bigr)$ &  & $\approx \frac{2Rn}{R_5+2Rn}$        \\
$(1,1) \rightarrow (0,1)$ &  & $0$         & $-\infty$   &  & $off \rightarrow on$ & $off$                & $on \rightarrow off$ & $on$                 &  & $0$                                                                        &  & $0$                                                                        &  & $0$                                                                        &  & $=1$                                \\
$(0,1) \rightarrow (0,0)$ &  & $-|\Delta|$ & $0$         &  & $on$                 & $off \rightarrow on$ & $off$                & $on \rightarrow off$ &  & $0$                                                                        &  & $\int_{0}^{t}(1/(\frac{\alpha_3}{s}+\frac{\alpha_4}{s+\Delta}+2R_p))\dd s$ &  & $0$                                                                        &  & $\approx \frac{2R_p}{R_5+2R_p}$      \\
$(1,1) \rightarrow (1,0)$ &  & $-\infty$   & $0$         &  & $off$                & $off \rightarrow on$ & $on$                 & $on \rightarrow off$ &  & $0$                                                                        &  & $0$                                                             &  & $0$                                                                        &  & $=1$                                \\
$(1,0) \rightarrow (0,0)$ &  & $0$         & $-|\Delta|$ &  & $off \rightarrow on$ & $on$                 & $on \rightarrow off$ & $off$                &  & $0$                                                                        &  & $\int_{0}^{t}(1/(\frac{\alpha_3}{s+\Delta}+\frac{\alpha_4}{s}+2R_p))\dd s$ &  & $0$                                                                        &  & $\approx \frac{2R_p}{R_5+2R_p}$     \\ \hline
\end{tabular}}
\label{T:Cgate}
\end{table*}

\begin{thm}[MIS Delay functions for interconnected \cg\ gates] \label{thm:MISdelayCgate}
For any $0 \leq |\Delta| \leq \infty$, the MIS delay functions of our model for rising and falling input transitions are respectively given by
{\footnotesize
\begin{align}
\delta_{M,+}^{\uparrow}(\Delta) &= \begin {cases}
\delta_{0}^{\uparrow} - \frac{\alpha_1}{\alpha_1+\alpha_2} \Delta  &   \ \ 0 \leq \Delta < \frac{(\alpha_1+\alpha_2)(\delta_{0}^{\uparrow} - \delta_{\infty}^{\uparrow})}{\alpha_1}   \\ 
\delta_{\infty}^{\uparrow} &   \ \ \Delta \geq \frac{(\alpha_1+\alpha_2)(\delta_{0}^{\uparrow} - \delta_{\infty}^{\uparrow})}{\alpha_1}
\end {cases}
\\
\delta_{M,-}^{\uparrow}(\Delta) &= \begin {cases}
\delta_{0}^{\uparrow} - \frac{\alpha_2}{\alpha_1+\alpha_2} |\Delta|  &   \ \ 0 \leq |\Delta| < \frac{(\alpha_1+\alpha_2)(\delta_{0}^{\uparrow} - \delta_{-\infty}^{\uparrow})}{\alpha_2}  \\ 
\delta_{-\infty}^{\uparrow} &   \ \ |\Delta| \geq \frac{(\alpha_1+\alpha_2)(\delta_{0}^{\uparrow} - \delta_{-\infty}^{\uparrow})}{\alpha_2}
\end {cases}\label{RisingmisdelayformulaminusCG}
\end{align}
\begin{align}
\delta_{M,+}^{\downarrow}(\Delta) &= \begin {cases}
\delta_{0}^{\downarrow} - \frac{\alpha_4}{\alpha_3+\alpha_4} \Delta  &   \ \ 0 \leq \Delta < \frac{(\alpha_3+\alpha_4)(\delta_{0}^{\downarrow} - \delta_{\infty}^{\downarrow})}{\alpha_4}  \\ 
\delta_{\infty}^{\downarrow} &   \ \ \Delta \geq \frac{(\alpha_3+\alpha_4)(\delta_{0}^{\downarrow} - \delta_{\infty}^{\downarrow})}{\alpha_4}
\end {cases}\label{Fallingmisdelayformula}
\\
\delta_{M,-}^{\downarrow}(\Delta) &= \begin {cases}
\delta_{0}^{\downarrow} - \frac{\alpha_3}{\alpha_3+\alpha_4} |\Delta|  &   \ \ 0 \leq |\Delta| < \frac{(\alpha_3+\alpha_4)(\delta_{0}^{\downarrow} - \delta_{-\infty}^{\downarrow})}{\alpha_3}   \\ 
\delta_{-\infty}^{\downarrow} &   \ \ |\Delta| \geq \frac{(\alpha_3+\alpha_4)(\delta_{0}^{\downarrow} - \delta_{-\infty}^{\downarrow})}{\alpha_3}
\end {cases}\label{Fallingmisdelayformulaminus}
\end{align}
}
where
{\footnotesize
\begin{align}
\delta_{0}^{\uparrow} &= - \frac{\alpha_1 + \alpha_2}{2R_{n}} \Bigl[ 1+ W_{-1}\Bigl(\frac{-1}{e \cdot 2^{\frac{2R_{n}C(R_5+2R_{n})}{\alpha_1+ \alpha_2}}}\Bigr) \Bigr],  \label{eq:delta0Cup}
\end{align}
\begin{align}
\delta_{\infty}^{\uparrow}&= -\frac{\alpha_2}{2R_{n}} \Bigl[ 1+ W_{-1}\Bigl(\frac{-1}{e \cdot 2^{\frac{2R_{n}C(R_5+2R_{n})}{\alpha_2}}}\Bigr) \Bigr],  \label{eq:deltainfCup}
\end{align}
\begin{align}
\delta_{-\infty}^{\uparrow}&= -\frac{\alpha_1}{2R_{n}} \Bigl[ 1+ W_{-1}\Bigl(\frac{-1}{e \cdot 2^{\frac{2R_{n}C(R_5+2R_{n})}{\alpha_1}}}\Bigr) \Bigr], \label{eq:deltaminfCup}
\end{align}
\begin{align}
\delta_{0}^{\downarrow} &= - \frac{\alpha_3 + \alpha_4}{2R_{p}} \Bigl[ 1+ W_{-1}\Bigl(\frac{-1}{e \cdot 2^{\frac{2R_{p}C(R_5+2R_{p})}{\alpha_3+ \alpha_4}}}\Bigr) \Bigr],  \label{eq:delta0Cdown}
\end{align}
\begin{align}
\delta_{\infty}^{\downarrow}&= -\frac{\alpha_3}{2R_{p}} \Bigl[ 1+ W_{-1}\Bigl(\frac{-1}{e \cdot 2^{\frac{2R_{p}C(R_5+2R_{p})}{\alpha_3}}}\Bigr) \Bigr],  \label{eq:deltainfCdown}
\end{align}
\begin{align}
\delta_{-\infty}^{\downarrow}&= -\frac{\alpha_4}{2R_{p}} \Bigl[ 1+ W_{-1}\Bigl(\frac{-1}{e \cdot 2^{\frac{2R_{p}C(R_5+2R_{p})}{\alpha_4}}}\Bigr) \Bigr]. \label{eq:deltaminfCdown}
\end{align}
}
\end{thm}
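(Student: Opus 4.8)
The plan is to exploit the mode-by-mode correspondence between the \cg\ gate of \cref{T:Cgate} and the falling-input case of the \NOR\ gate of \cref{tab:T1}--\cref{T:InerInt} already anticipated in the discussion preceding the statement, so that every step reduces to a result already proved in \cref{thm:MISOuttraj}, \cref{thm:MISdelay}, or the proof of \cref{Corol:1}. First I would run through the eight mode switches of \cref{T:Cgate} and observe that in each one exactly one of the two conductance integrals has the ``continuous-resistor'' form $\int_0^t\dd s/(\tfrac{\alpha_i}{s+\Delta}+\tfrac{\alpha_j}{s}+2R_\bullet)$, with $R_\bullet\in\{R_n,R_p\}$, while the other vanishes; hence $1/R_g(t)$ in \cref{eq:RgC} collapses to a single series branch, which is exactly the algebraic situation of the \NOR\ modes $T^{\downarrow\downarrow}_\pm$. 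Moreover the first member of every transition pair has $I_1\equiv I_2\equiv 0$ and $U\equiv 0$, so $\vout$ simply holds its value (the \cg\ output node is then high-impedance, kept by $C$); this constant first trajectory plays precisely the role that the identically zero $V_{out}^{T^{\downarrow}_{-}}(\Delta)=0$ played in procedure~(ii) after \cref{thm:MISOuttraj}.

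Next I would write down the substitution dictionary and feed it into \cref{SoughtOutput}. A rising output of the \cg\ gate is driven, through the model re-designation that absorbs the output negation, by the series stack $R_1+R_2$ charging $\vout$ from $0$ toward $\vdd$ with $U\neq 0$ --- exactly the ODE structure of the \NOR\ mode $T^{\downarrow\downarrow}_+$. Its trajectory is therefore \cref{SoughtOutput} with $\alpha_1,\alpha_2$ kept, $2R$ replaced by $2R_n$, and, following verbatim the constant-$f(t)$ argument in the proof of \cref{Corol:1} with the value $F=2R_n/(R_5+2R_n)$ read from \cref{T:Cgate}, $C$ replaced by $C_3^{\uparrow}:=C(R_5+2R_n)/(2R_n)$, i.e.\ \cref{eq:C_3_int} with $R\mapsto R_n$. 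A falling output is the mirror situation: the series stack $R_3+R_4$ discharges $\vout$ from $\vdd$ toward $0$ with $U\equiv 0$, so its trajectory is the homogeneous truncation $\vdd\,e^{-G(t)}$ of \cref{SoughtOutput} --- consistent, since \cref{SoughtOutput} itself has the form $\vdd+(V_0-\vdd)e^{-G(t)}$ --- now with $\alpha_1\mapsto\alpha_4$, $\alpha_2\mapsto\alpha_3$, $2R\mapsto 2R_p$, and $C\mapsto C_3^{\downarrow}:=C(R_5+2R_p)/(2R_p)$; inside each case the $+$ and $-$ variants are obtained, exactly as in \cref{thm:MISOuttraj} and \cref{thm:MISdelay}, by swapping the two $\alpha$'s and replacing $\Delta$ by $|\Delta|$. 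In all four cases the threshold crossing $\vout=\vdd/2$ amounts to $e^{-G(t)}=\tfrac12$, i.e.\ the root in $t$ of an equation of the same shape as \cref{BiVarFunction} with $C$ replaced by the relevant $C_3^{\uparrow/\downarrow}$.

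From there the argument transcribes the bootstrapping computation behind \cref{thm:MISdelay}: the implicit equation is singular at $(t,\Delta)=(0,0)$, the bootstrapping method \cite{deB70} supplies the $\Delta\to 0$ asymptotic expansion, and matching it against the two extremal delays turns the delay into a function that is affine in $\Delta$ with slope $-\alpha_1/(\alpha_1+\alpha_2)$ (resp.\ $-\alpha_2/(\alpha_1+\alpha_2)$, $-\alpha_4/(\alpha_3+\alpha_4)$, $-\alpha_3/(\alpha_3+\alpha_4)$) up to a kink and constant beyond it, the kink sitting where the affine piece meets the asymptote --- exactly the piecewise structure asserted in \cref{thm:MISdelayCgate}. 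The extremal values are \cref{eq:delta0}--\cref{eq:deltaminf} read through the dictionary; here one uses $4R^2\cdot\tfrac{C(R_5+2R)}{2R}=2R\,C(R_5+2R)$, which is precisely the exponent of $2$ in \cref{eq:delta0Cup}--\cref{eq:deltaminfCdown}. Since the first trajectory in procedure~(ii) is constant, only a single per-case capacitance $C_3^{\uparrow/\downarrow}$ enters --- there is no parallel-stack transition for the \cg\ gate, hence nothing analogous to the $C_1,C_2$ bookkeeping of \cref{Corol:1} --- and proving one case (say $\delta_{M,+}^{\uparrow}$) and invoking the dictionary delivers the other three.

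I expect the main obstacle to be the careful bookkeeping that certifies the correspondence with the \NOR\ falling-input case is \emph{exact}, mode for mode: one must verify that in each active \cg\ mode the opposite transistor stack is genuinely open (so $1/R_g$ really is a single series term, not a sum), that the way initial values chain across the two transitions of a pair coincides with procedure~(ii), that the $U\equiv 0$ cases really reduce to the homogeneous truncation of \cref{SoughtOutput} under the reflection $\vout\mapsto\vdd-\vout$, and that the constant-$F$ replacement is invoked under the same ``fast switch-on, narrow variation window'' justification used for the interconnected \NOR\ gate in \cref{Sec:IntModel}, so that the residual approximation error is controlled identically. Everything downstream of that verification is a mechanical substitution into formulas already established. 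If one additionally wishes to model an interconnect output pure delay, a term $\dmin$ can be appended to every branch exactly as in \cref{Corol:1}; as stated, \cref{thm:MISdelayCgate} reports the intrinsic delay without it.
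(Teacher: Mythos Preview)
Your proposal is correct and follows essentially the same approach as the paper: the paper itself does not give a formal proof of \cref{thm:MISdelayCgate} but justifies it in the paragraph preceding the statement by noting that both the rising and falling input transition cases of the \cg\ gate are structurally identical to the falling input transition case of the \NOR\ gate, up to the substitutions $R\mapsto R_n$ (rising) resp.\ $R\mapsto R_p$ with $\alpha_1\mapsto\alpha_4$, $\alpha_2\mapsto\alpha_3$ (falling). Your write-up spells out this correspondence mode by mode, makes explicit the constant first trajectory (high-impedance state keeping), the single-series-branch reduction of $1/R_g$, the $C\mapsto C_3^{\uparrow/\downarrow}$ replacement via the constant-$F$ argument, and the $4R^2\cdot C(R_5+2R)/(2R)=2RC(R_5+2R)$ identity for the Lambert-$W$ exponents --- all of which the paper leaves implicit or defers to \cite{ferdowsi2024hybrid}.
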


\subsection{Model parametrization}

Due to the similarity of both rising and falling input transitions in our \cg\ gate modeling to the falling input transition case of the \NOR\ gate, it turns out that the parametrization for the \cg\ gate involves determining $R_n$ and $R_p$ using two functions, both of which are in the form of \cref{eq:AtRC}. The following \cref{thm:Cgatechar} provides the detailed procedure.

\begin{thm}[Model parametrization for interconnect-augmented \cg\ gates]\label{thm:Cgatechar}
Let $\dmin\geq0$ be some interconnect pure delay and $\ddoD_S(-\infty)$,  $\ddoD_S(0)$, $\ddoD_S(\infty)$ and $\dupD_S(-\infty)$,  $\dupD_S(0)$, $\dupD_S(\infty)$ be the extremal MIS delay values of a real interconnected \cg\ gate that shall be matched by our model. Given an arbitrary chosen value $C$ for the load capacitance, this is accomplished by choosing the model parameters as follows:
For

{\small
\begin{align}
&B(t,z,C)= \nonumber \\
&\qquad \frac{-\bigl(t-Cz\log(2) \bigr)}{W_{-1}\Bigl( \bigl(\frac{Cz\log(2)}{t}-1\bigr) e^{\frac{Cz\log(2)}{t}-1} \Bigl) +1 - \frac{Cz\log(2)}{t}}\label{eq:BtRC_CG},
\end{align}
}
numerically solve
\begin{align}
B\bigl(\dupD_S(0)-\dmin,x,C\bigr)& - B\bigl(\dupD_S(\infty)-\dmin,x,C\bigr)\nonumber\\
 & - B\bigl(\dupD_S(-\infty)-\dmin,x,C\bigr) = 0 \label{eq:forR_C1},\\
B\bigl(\ddoD_S(0)-\dmin,y,C\bigr)& -B\bigl(\ddoD_S(\infty)-\dmin,y,C\bigr)\nonumber\\
& - B\bigl(\ddoD_S(-\infty)-\dmin,y,C\bigr) = 0 \label{eq:forR_C2}.
\end{align}
for $x$ and $y$, respectively. For any choice of $0 \leq R_5 < \min\{x,y\}$, 
let $R_n=(x-R_5)/2>0$ and $R_p=(y-R_5)/2>0$, and
choose
\begin{flalign}
\alpha_1 &= 2R_n B\bigl(\dupD_S(-\infty)-\dmin,x,C\bigr)\label{eq:alpha1C},\\
\alpha_2 &= 2R_n B\bigl(\dupD_S(\infty)-\dmin,x,C\bigr)\label{eq:alpha2C}, \\
\alpha_3 &= 2R_p B\bigl(\ddoD_S(-\infty)-\dmin,y,C\bigr)\label{eq:alpha3C},\\
\alpha_4 &= 2R_p B\bigl(\ddoD_S(\infty)-\dmin,y,C\bigr)\label{eq:alpha4C}.
\end{flalign}
\end{thm}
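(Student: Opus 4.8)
The plan is to reduce the whole statement to the \NOR-gate parametrization of \cref{thm:gatechar}, exploiting the structural correspondence already recorded before \cref{thm:MISdelayCgate}: the \cg-gate rising-input delay formulas \cref{eq:delta0Cup}--\cref{eq:deltaminfCup} are precisely the \NOR-gate falling-input formulas \cref{eq:delta0_int}--\cref{eq:deltaminf_int} with $R$ replaced by $R_n$ (and $\alpha_1,\alpha_2,C,R_5$ unchanged), while the \cg-gate falling-input formulas \cref{eq:delta0Cdown}--\cref{eq:deltaminfCdown} arise from the same \NOR formulas under $R\mapsto R_p$, $\alpha_1\mapsto\alpha_4$, $\alpha_2\mapsto\alpha_3$. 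The first fact I would record is the purely algebraic identity $A(t,R,R_5,C)=2R\cdot B(t,R_5+2R,C)$, immediate from comparing \cref{eq:AtRC} with \cref{eq:BtRC_CG}; this is exactly what lets the four-argument function of the \NOR case collapse to the three-argument $B$ of the \cg case, at the price of reintroducing a separate $2R_n$ (resp.\ $2R_p$) prefactor.

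Next I would rerun the ``falling-input'' part of the proof of \cref{thm:gatechar} verbatim for each of the two \cg transition families. For the rising-input family, matching the extremal predicted delays $\delta_0^{\uparrow},\delta_\infty^{\uparrow},\delta_{-\infty}^{\uparrow}$ to $\dupD_S(0)-\dmin$, $\dupD_S(\infty)-\dmin$, $\dupD_S(-\infty)-\dmin$ yields, exactly as in the derivation of $A$ and in \cref{eq:alpha1}--\cref{eq:alpha2}, the three relations $\alpha_1+\alpha_2=A(\dupD_S(0)-\dmin,R_n,R_5,C)$, $\alpha_2=A(\dupD_S(\infty)-\dmin,R_n,R_5,C)$, $\alpha_1=A(\dupD_S(-\infty)-\dmin,R_n,R_5,C)$. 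Adding the last two and equating to the first gives the consistency condition; dividing it by $2R_n$ and applying the identity above converts it into \cref{eq:forR_C1} in the single unknown $x:=R_5+2R_n$, after which $\alpha_1,\alpha_2$ are read off as \cref{eq:alpha1C}, \cref{eq:alpha2C}. The falling-input family is treated symmetrically under the relabelings above, producing \cref{eq:forR_C2} in $y:=R_5+2R_p$ together with \cref{eq:alpha3C}, \cref{eq:alpha4C}. As in \cref{thm:MISdelay}, once the three extremal values are matched the piecewise-linear segments match automatically, since their slopes and breakpoints depend only on the $\alpha_i$.

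The last step is to read off the remaining degree of freedom. The rising equations determine only the combination $x=R_5+2R_n$ and the falling equations only $y=R_5+2R_p$, so (unlike the \NOR case, where the simple-exponential rising branch pinned $R_5$ down) nothing constrains $R_5$ itself. Hence $R_5$ may be chosen freely; for any $0\le R_5<\min\{x,y\}$ the values $R_n=(x-R_5)/2$ and $R_p=(y-R_5)/2$ are positive, and since $\dupD_S(\pm\infty)<\dupD_S(0)$ and $\ddoD_S(\pm\infty)<\ddoD_S(0)$ (the sign of the MIS effect on the \cg gate) the relevant values of $B$ are positive, so the resulting $\alpha_1,\dots,\alpha_4$ are positive, and by construction all six matching identities hold.

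I expect the main obstacle to be bookkeeping rather than a genuine mathematical difficulty: one must be meticulous that the substitution pattern $R\mapsto R_n$ versus $R\mapsto R_p$, together with $\alpha_1\leftrightarrow\alpha_4$ and $\alpha_2\leftrightarrow\alpha_3$, is applied consistently across \cref{T:Cgate}, the output-voltage trajectories, and the delay formulas of \cref{thm:MISdelayCgate}, so that the consistency equations really decouple into ``$x$ determined'' and ``$y$ determined'' with $R_5$ left over. One should also check that the existence and the correct branch selection for the solutions of \cref{eq:forR_C1}--\cref{eq:forR_C2} transfer from the corresponding statement for \cref{eq:forR}, and verify that $0\le R_5<\min\{x,y\}$ is exactly the admissible range in which all computed resistances and slope parameters remain physically meaningful.
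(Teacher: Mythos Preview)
Your proposal is correct and follows essentially the same route as the paper: both arguments reduce the \cg-gate parametrization to two independent copies of the \NOR\ falling-input analysis via the substitution $z=R_5+2R_n$ (resp.\ $z=R_5+2R_p$), observe that only the combinations $x$ and $y$ are determined so that $R_5$ remains a free parameter in $[0,\min\{x,y\})$, and then read off the $\alpha_i$ from the analogs of \cref{eq:alpha1}--\cref{eq:alpha2}. Your write-up is in fact somewhat more explicit than the paper's (you spell out the three matching relations and their sum, and you address positivity of the $\alpha_i$); note also that your identity $A(t,R,R_5,C)=2R\,B(t,R_5+2R,C)$ is the form consistent with both the definitions \cref{eq:AtRC}, \cref{eq:BtRC_CG} and the final formulas \cref{eq:alpha1C}--\cref{eq:alpha4C}.
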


\begin{proof} 
The correspondence of both the rising and falling input transition case of the \cg\ gate to the falling input transition case 
allows us to re-use \cref{eq:AtRC} for the interconnected \NOR\ gate, i.e.,

{\scriptsize
\begin{align}
&A(t,R,R_5,C)=& \nonumber \\
&\frac{-2R \bigl(t-C(R_5+2R) \cdot \log(2) \bigr)}{W_{-1}\Bigl( \bigl(\frac{C(R_5+2R) \cdot \log(2)}{t}-1\bigr) e^{\frac{C(R_5+2R) \cdot \log(2)}{t}-1} \Bigl) +1 - \frac{C(R_5+2R) \cdot \log(2)}{t}}\label{eq:AtRC_CG},&
\end{align}
}
The definition of \cref{eq:BtRC_CG} ensures
\begin{align}
A(t,R_n,R_5,C) &= \frac{B(t,R_5+2R_n,C)}{2R_n},\label{eq:AtRC_CG1}\\
A(t,R_p,R_5,C) &= \frac{B(t,R_5+2R_p,C)}{2R_p}.\label{eq:AtRC_CG2}
\end{align}

Whereas we cannot numerically solve the analog of \cref{eq:forR} for both functions 
since $R_n$, $R_p$ and $R_5$ are unknown, we can numerically solve the equivalent
equations \cref{eq:forR_C1} for $x$ and \cref{eq:forR_C2} for $y$.

Since $x=R_5+2R_n$ and $y=R_5+2R_p$, one can choose $R_5$ freely within $[0,\min\{x,y\})$. Using the resulting value
of $R_n$ resp.\ $R_p$ and recalling \cref{eq:alpha1} and \cref{eq:alpha2} yields \cref{eq:alpha1C} and \cref{eq:alpha2C}
resp.\ \cref{eq:alpha3C} and \cref{eq:alpha4C}.
\end{proof}

\subsection{Experimental accuracy evaluation}

We first consider an isolated 15nm \cg\ gate with parameters shown in \cref{table:Param3c}, with $R_5=0$. 
\cref{fig:charlieC_RIS} and \cref{fig:charlieC_FALL} show the results, which reveal
a very good match between the SPICE-generated and predicted delays.

\begin{table}[h]
\centering
\caption{\small\em Model parameter values for the 15nm \cg\ gate with chosen value of $C=2.6331 fF$, $\delta_{min}=1.77e-12$, and $R_5=0$.}
\label{table:Param3c}
\scalebox{0.8}{
\begin{tabular}{|clcc|}
\hline
\multicolumn{2}{|c|}{$R_{n}=2.1420\ k \si{\ohm}$} & \multicolumn{1}{c|}{$\alpha_1=2.1472\ \si{\ohm\s}$}  & $\alpha_2=1.1303\ \si{\ohm\s}$  \\ \hline
\multicolumn{2}{|c|}{$R_{p}=2.3215\ k \si{\ohm}$} & \multicolumn{1}{c|}{$\alpha_3= 1.5549\ \si{\ohm\s}$} & $\alpha_4= 1.8403\ \si{\ohm\s}$ \\ \hline
\end{tabular}}
\end{table}

\begin{figure}[th]
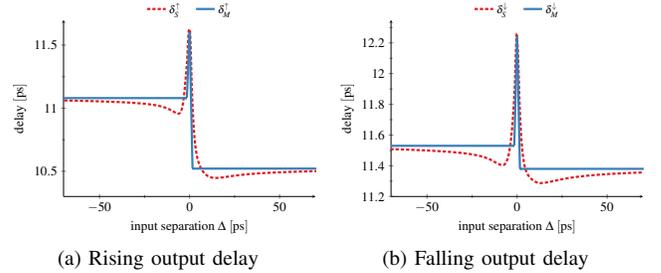

  \centering
  \subfloat[Rising output delay]{
 \includegraphics[height=0.36\linewidth]{\figPath{MIS_Cgate_RIT.pdf}}%
    \label{fig:charlieC_RIS}} 
      \hfil
  \subfloat[Falling output delay]{
 \includegraphics[height=0.36\linewidth]{\figPath{MIS_Cgate_FIT.pdf}}%
    \label{fig:charlieC_FALL}}
  \caption{Computed ($\delta_M^{\uparrow/ \downarrow}(\Delta)$) and measured ($\delta_S^{\uparrow/ \downarrow}(\Delta)$) MIS delays for a 15nm technology isolated \cg\ gate.}
\end{figure}

For interconnected \cg\ gates, we showcase some results by varying the length, resistance, and capacitance of the wire. \cref{table:ParamCgW} and \cref{table:ParamCgDCHR} present the relevant parameters for different wire lengths and different resistance/capacitance values of the wire, for essentially random choices of $R_5$. To demonstrate that the choice of $R_5$ (within the bounds stated in \cref{thm:Cgatechar}) is irrelevant for the modeling accuracy, \cref{table:ParamCgWzero} and \cref{table:ParamCgDCHRzero} provide the analogous parametrizations for the choice $R_5=0$.

\cref{Fig:WlCg} and \cref{Cresult_CAP_RES_15} show the corresponding results, for both choices of $R_5$, which again
show a very good match.

\begin{table}[h]
\centering
\caption{\small\em Model parameter values for two wire lengths $l=3 \  \mu m$ and $l=15 \  \mu m$, for \cg\ gates with $\delta_{min}=1.7\ ps$ and $\delta_{min}=1.77\ ps$, respectively. The chosen load capacitance values is $C=2.6331 fF$, the chosen value for $R_5$ given below is essentially random.}
\label{table:ParamCgW}
\scalebox{0.68}{
\begin{tabular}{|clccc|}
\hline
\multicolumn{5}{|c|}{Parameters for $l=3 \  \mu m$}                                                                            \\ \hline
\multicolumn{2}{|c|}{$R_{n}=964.76 \ \si{\ohm}$}    & \multicolumn{1}{c|}{$R_{p}=1.146 \ k \si{\ohm}$}    & \multicolumn{2}{c|}{$R_5= 545.49 \ \si{\ohm}$}                    \\ \hline
\multicolumn{2}{|c|}{$\alpha_1=645.48e-12 \ \si{\ohm} s$} & \multicolumn{1}{c|}{$\alpha_2= 264.94e-12 \ \si{\ohm} s$} & \multicolumn{1}{c|}{$\alpha_3= 255.59e-12 \ \si{\ohm} s$} & $\alpha_4=406.81e-12 \ \si{\ohm} s$ \\ \hline
\multicolumn{5}{|c|}{Parameters for $l=15 \  \mu m$}                                                                           \\ \hline
\multicolumn{2}{|c|}{$R_{n}=1.418 \ k \si{\ohm}$}    & \multicolumn{1}{c|}{$R_{p}= 1.487 \ k \si{\ohm}$}    & \multicolumn{2}{c|}{$R_5=801.28 \ \si{\ohm}$}                    \\ \hline
\multicolumn{2}{|c|}{$\alpha_1=991.49e-12 \ \si{\ohm} s$} & \multicolumn{1}{c|}{$\alpha_2=396.18e-12 \ \si{\ohm} s$} & \multicolumn{1}{c|}{$\alpha_3=942.30e-12 \ \si{\ohm} s$} & $\alpha_4=1.20e-12 \ \si{\ohm} s$ \\ \hline
\end{tabular}}
\end{table}

\begin{table}[h]
\centering
\caption{\small\em Model parameter values for two wire lengths $l=3 \  \mu m$ and $l=15 \  \mu m$, for \cg\ gates with $\delta_{min}=1.7\ ps$ and $\delta_{min}=1.77\ ps$, respectively. The chosen load capacitance values is $C=2.6331 fF$, and $R_5=0$.}
\label{table:ParamCgWzero}
\scalebox{0.66}{
\begin{tabular}{|clccc|}
\hline
\multicolumn{5}{|c|}{Parameters for $l=3 \  \mu m$}                                                                            \\ \hline
\multicolumn{2}{|c|}{$R_{n}=1.237 \ k  \si{\ohm}$}    & \multicolumn{1}{c|}{$R_{p}=1.419 \ k \si{\ohm}$}    & \multicolumn{2}{c|}{$R_5= 0 \ \si{\ohm}$}                    \\ \hline
\multicolumn{2}{|c|}{$\alpha_1=827.97e-12 \ \si{\ohm} s$} & \multicolumn{1}{c|}{$\alpha_2= 339.84e-12 \ \si{\ohm} s$} & \multicolumn{1}{c|}{$\alpha_3= 316.40e-12 \ \si{\ohm} s$} & $\alpha_4=503.61e-12 \ \si{\ohm} s$ \\ \hline
\multicolumn{5}{|c|}{Parameters for $l=15 \  \mu m$}                                                                           \\ \hline
\multicolumn{2}{|c|}{$R_{n}=1.818 \ k \si{\ohm}$}    & \multicolumn{1}{c|}{$R_{p}= 1.888 \ k \si{\ohm}$}    & \multicolumn{2}{c|}{$R_5=0 \ \si{\ohm}$}                    \\ \hline
\multicolumn{2}{|c|}{$\alpha_1=1000.27e-12 \ \si{\ohm} s$} & \multicolumn{1}{c|}{$\alpha_2=508.10e-12 \ \si{\ohm} s$} & \multicolumn{1}{c|}{$\alpha_3=1000.19e-12 \ \si{\ohm} s$} & $\alpha_4=1000.52e-12 \ \si{\ohm} s$ \\ \hline
\end{tabular}}
\end{table}

\begin{figure*}[t!]
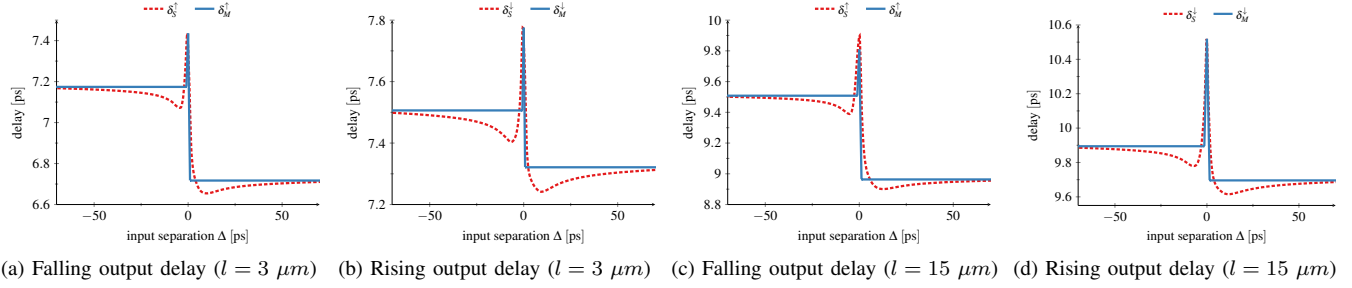

  \centering
  \subfloat[Falling output delay ($l=3 \  \mu m$)]{
    \includegraphics[width=0.23\linewidth]{\figPath{Cg_RITL3.pdf}}%
    \label{Fig:WlR3C}}
  \hfil
  \subfloat[Rising output delay ($l=3 \  \mu m$)]{
    \includegraphics[width=0.23\linewidth]{\figPath{Cg_FITL3.pdf}}%
    \label{Fig:WlF3C}}
 \hfil
  \subfloat[Falling output delay ($l=15 \  \mu m$)]{
 \includegraphics[width=0.23\linewidth]{\figPath{Cg_RITL15.pdf}}
    \label{Fig:WlR15C}}
  \hfil
  \subfloat[Rising output delay ($l=15 \  \mu m$)]{
\includegraphics[width=0.23\linewidth]{\figPath{Cg_FITL15.pdf}}%
    \label{Fig:WlF15C}}
  \caption{\small\em SPICE-generated ($\delta_S^{\uparrow / \downarrow}(\Delta)$) and predicted ($\delta_M^{\uparrow / \downarrow}(\Delta)$) MIS delays for a 15nm technology \cg\ gate for different wire lengths $l$, for any feasible choice of $R_5$.}\label{Fig:WlCg}  
\end{figure*}

\begin{table}[h]
\centering
\caption{\small\em Model parameter values for different wire resistances and capacitances of the \cg\ gate with chosen values of $\delta_{min}=1.3\ ps$ and $\delta_{min}=1.74\ ps$ for double capacitance and half resistance. The chosen load capacitance values is $C=2.6331 fF$, the chosen value for $R_5$ given below is essentially random.}
\label{table:ParamCgDCHR}
\scalebox{0.72}{
\begin{tabular}{|clccc|}
\hline
\multicolumn{5}{|c|}{Parameters for doubling the capacitance}                                                                            \\ \hline
\multicolumn{2}{|c|}{$R_{n}=1.794 \ k \si{\ohm}$}    & \multicolumn{1}{c|}{$R_{p}=2.060 \ k \si{\ohm}$}    & \multicolumn{2}{c|}{$R_5=  1.013 \ k \si{\ohm}$}                    \\ \hline
\multicolumn{2}{|c|}{$\alpha_1=2.614e-9 \ \si{\ohm} s$} & \multicolumn{1}{c|}{$\alpha_2= 1.629e-19 \ \si{\ohm} s$} & \multicolumn{1}{c|}{$\alpha_3= 1.728e-9 \ \si{\ohm} s$} & $\alpha_4=1.912e-9 \ \si{\ohm} s$ \\ \hline
\multicolumn{5}{|c|}{Parameters for half the resistor values}                                                                           \\ \hline
\multicolumn{2}{|c|}{$R_{n}=1.383\ k \si{\ohm}$}    & \multicolumn{1}{c|}{$R_{p}= 1.492 \ k \si{\ohm}$}    & \multicolumn{2}{c|}{$R_5=781.49 \ \si{\ohm}$}                    \\ \hline
\multicolumn{2}{|c|}{$\alpha_1=1.138e-9 \ \si{\ohm} s$} & \multicolumn{1}{c|}{$\alpha_2=0.516e-9 \ \si{\ohm} s$} & \multicolumn{1}{c|}{$\alpha_3=0.931e-9 \ \si{\ohm} s$} & $\alpha_4=1.184e-9 \ \si{\ohm} s$ \\ \hline
\end{tabular}}
\end{table}

\begin{table}[h]
\centering
\caption{\small\em Model parameter values for different wire resistances and capacitances of the \cg\ gate with chosen values of $\delta_{min}=1.3\ ps$ and $\delta_{min}=1.74\ ps$ for double capacitance and half resistance. The chosen load capacitance values is $C=2.6331 fF$, and $R_5=0$.}
\label{table:ParamCgDCHRzero}
\scalebox{0.72}{
\begin{tabular}{|clccc|}
\hline
\multicolumn{5}{|c|}{Parameters for doubling the capacitance}                                                                            \\ \hline
\multicolumn{2}{|c|}{$R_{n}=2.301 \ k \si{\ohm}$}    & \multicolumn{1}{c|}{$R_{p}=2.567 \ k \si{\ohm}$}    & \multicolumn{2}{c|}{$R_5=  0 \  \si{\ohm}$}                    \\ \hline
\multicolumn{2}{|c|}{$\alpha_1=3.353e-9 \ \si{\ohm} s$} & \multicolumn{1}{c|}{$\alpha_2= 2.089e-19 \ \si{\ohm} s$} & \multicolumn{1}{c|}{$\alpha_3= 2.154e-9 \ \si{\ohm} s$} & $\alpha_4=2.383e-9 \ \si{\ohm} s$ \\ \hline
\multicolumn{5}{|c|}{Parameters for half the resistor values}                                                                           \\ \hline
\multicolumn{2}{|c|}{$R_{n}=1.773\ k \si{\ohm}$}    & \multicolumn{1}{c|}{$R_{p}=1.882 \ k \si{\ohm}$}    & \multicolumn{2}{c|}{$R_5=0 \ \si{\ohm}$}                    \\ \hline
\multicolumn{2}{|c|}{$\alpha_1=1.459e-9 \ \si{\ohm} s$} & \multicolumn{1}{c|}{$\alpha_2= 662.113e-9 \ \si{\ohm} s$} & \multicolumn{1}{c|}{$\alpha_3= 1.176e-9 \ \si{\ohm} s$} & $\alpha_4= 1.495e-9 \ \si{\ohm} s$ \\ \hline
\end{tabular}}
\end{table}

\begin{figure*}[t!]
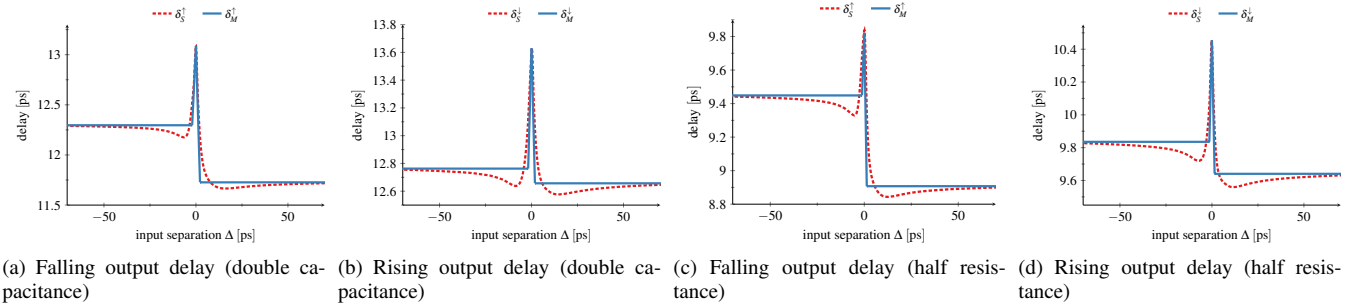

  \centering
  \subfloat[Falling output delay (double capacitance)]{
    \includegraphics[width=0.23\linewidth]{\figPath{Cg_RITL15_DC.pdf}}%
    \label{fig:CresultDoublecapa}}
  \hfil
  \subfloat[Rising output delay (double capacitance)]{
    \includegraphics[width=0.23\linewidth]{\figPath{Cg_FITL15_DC.pdf}}%
    \label{fig:CresultDoublecap2}}
 \hfil
  \subfloat[Falling output delay (half resistance)]{
 \includegraphics[width=0.23\linewidth]{\figPath{Cg_RITL15_HR.pdf}}
    \label{fig:CresultHalfRa}}
  \hfil
  \subfloat[Rising output delay (half resistance)]{
\includegraphics[width=0.23\linewidth]{\figPath{Cg_FITL15_HR.pdf}}%
    \label{fig:CresultHalfRb}}
  \caption{\small\em SPICE-generated ($\delta_S^{\uparrow / \downarrow}(\Delta)$) and predicted ($\delta_M^{\uparrow / \downarrow}(\Delta)$) MIS delays for a 15nm technology \cg\ gate for wire length $l=15 \  \mu m$ when the wire capacitances are doubled (two left figures) resp.\ the wire resistors are halved (two right figures), for any feasible choice of $R_5$.}\label{Cresult_CAP_RES_15}  
\end{figure*}

\section{Concluding Remarks and Future Work}
\label{Sec:con}

In this paper, we developed thresholded first-order hybrid delay models for interconnected multi-input gates, in particular, 2-input
\NOR\ and Muller \cg\ gates, which accurately capture multi-input switching (MIS) effects. Besides analytic formulas 
for all MIS delays in terms of the parameters, which facilitate fast digital dynamic timing analysis based on discrete event simulation, we also provided fast procedures for determining the model parameters that allow our models to match the extremal MIS delays of a given real circuit. By comparing our model predictions to SPICE simulation data, we demonstrated a surprisingly good modeling accuracy for a wide range of settings, including varying wire lengths, resistances/capacitances, input driving strengths, and output load capacitances, for two different CMOS technologies. In terms of simulation running times, our approach is orders of magnitude faster
than SPICE.

Whereas the accuracy provided by our models is an impressive improvement of the current
state-of-the-art digital dynamic timing analysis tools, where MIS effects are not considered
at all, it is by no means competitive to the delay modeling approaches used in state-of-the-art 
static timing analysis tools like PrimeTime. This also explains why we could safely ignore 
subtle effects like crosstalk and multi-input glitches in our modeling. The nevertheless
surprisingly good MIS delay prediction accuracy of our first-oder model seems to be a 
consequence of two facts, namely, the choice of a ``just right'' model (based on the 
Shichman-Hodges transistor model), and our ``end-to-end'' 
parametrization procedure, which makes sure that the model predictions match the 
``extremal'' MIS delays for any given gate.

Part of our current/future work is to fully incorporate our models into the Involution Tool, which
requires the development of extended delay formulas that also incorporate the drafting effect.
Needless to say, this extension builds on the interconnect-augmented model developed in this paper.
This will finally allow us to simulate representative benchmark circuits in the Involution Tool and fairly compare the modeling accuracy of our final model with the existing digital dynamic timing 
analysis approaches.

\bibliographystyle{IEEEtran}
\bibliography{Paper}

\end{document}